%% file: main.tex
\documentclass[aps,pra,reprint,longbibliography,superscriptaddress,floatfix]{revtex4-2}

\input{header}

\makeatletter
\def\frontmatter@abstractheading{}
\def\frontmatter@abstract@produce{%
  \par
  \addvspace{\frontmatter@preabstractspace}%
  \begingroup
    \prep@absbox
    \unvbox\absbox
    \post@absbox
  \endgroup
  \@ifx{\@empty\mini@notes}{}{\mini@notes\par}%
  \addvspace\frontmatter@postabstractspace
}
\makeatother

\begin{document}
\sloppy

\title{\Title}

\author{Samuel Godwood}
\email{s.godwood@liverpool.ac.uk}
\affiliation{Department of Physics, University of Liverpool, Liverpool, L69 7ZL, United Kingdom}

\author{Do\~{g}a Murat K\"{u}rk\c{c}\"{u}o\~{g}lu}
\affiliation{Fermi National Accelerator Laboratory, Batavia, Illinois 60510, USA}

\author{Gabriel N. Perdue}
\affiliation{Fermi National Accelerator Laboratory, Batavia, Illinois 60510, USA}

\author{Marina Maneyro}
\affiliation{Department of Physics, University of Liverpool, Liverpool, L69 7ZL, United Kingdom}

\author{Alessandro Roggero}
\affiliation{Dipartimento di Fisica, University of Trento, Via Sommarive 14, I-38123 Povo, Trento, Italy}
\affiliation{INFN-TIFPA Trento Institute of Fundamental Physics and Applications, Via Sommarive 14, I-38123 Povo, Trento, Italy}

\begin{abstract}
Finite local Hilbert-space truncations arise naturally in quantum simulations of lattice field theories and motivate qudit encodings, but their fault-tolerant advantage over qubit encodings remains unclear. We compare the non-Clifford cost of implementing quadratic diagonal evolutions, exemplified by $U=e^{-it\phi_x^2}$ in a uniform field-amplitude discretization of a real scalar field, using either one logical $d$-level qudit or $n_b=\lceil \log_2 d\rceil$ logical qubits. We analyze two standard settings: product-formula simulation and LCU/block encoding, taking the resource metric to be the number of non-Clifford gates after synthesis into a discrete logical gate set. Because tight synthesis bounds for general single-qudit rotations are not known, we express the qudit constructions in terms of embedded two-level $SU(2)$ rotations and derive explicit finite-$d$ break-even conditions for their synthesis cost; these serve as compiler targets for when qudit encodings can outperform the qubit baseline. Within the constructive models studied here, product-formula implementations would require an exponentially stronger per-primitive synthesis advantage for qudits to win asymptotically, while in the LCU setting the qubit encoding is asymptotically cheaper in $d$. Nevertheless, the finite-$d$ threshold analysis identifies low dimensional regions in which qudits can yield meaningful constant-factor savings, particularly for LCU-based implementations. As a secondary analysis of the LCU construction, we use an idealized negligible-overhead qubit-qudit code-switching model to give an absolute $T$-count comparison, and reinterpret the savings as an allowable per-switch overhead budget.
\end{abstract}

\maketitle

\input{body}
\clearpage

\appendix
\input{appendix}
\clearpage

\makeatletter
\immediate\write\@auxout{\string\citation{apsrev42Control}}
\makeatother

\bibliographystyle{apsrev4-2}
\bibliography{refs}

\end{document}

%% file: header.tex
\usepackage[utf8]{inputenc}

\usepackage{graphicx}
\usepackage{hyperref}
\usepackage{amsmath}
\usepackage{amssymb}
\usepackage{amsthm}
\usepackage[braket]{qcircuit}
\usepackage{booktabs}

\hypersetup{
    colorlinks,
    linkcolor={blue},
    citecolor={blue},
    urlcolor={blue}
}

\newcommand{\Title}{Fault-Tolerant Resource Comparison of Qudit and Qubit Encodings for Diagonal Quadratic Operators}

\theoremstyle{plain}
\newtheorem{theorem}{Theorem}

\newtheorem{lemma}[theorem]{Lemma}
\newtheorem{corollary}[theorem]{Corollary}

\theoremstyle{definition}
\newtheorem{definition}[theorem]{Definition}

\theoremstyle{remark}


%% file: body.tex
\section{Introduction}\label{sec:intro}

Quantum computers promise first-principles access to strongly correlated quantum dynamics
that are difficult to treat classically,
with applications spanning high-energy physics and lattice gauge theory,
condensed matter and materials, and quantum chemistry
\cite{feynman82,lloyd96,preskill_qft_2018,
georgescu_quantum_simulation_2014,
cao_quantum_chemistry_2019,
ma_quantum_materials_2020}.
In particular, real-time Hamiltonian dynamics and finite-density settings
can challenge standard classical methods,
including regimes where Monte Carlo approaches suffer from severe sign problems
\cite{troyer_wiese_signproblem_2005,gattringer_signproblem_review_2016,nagata_finite_density_lqcd_review_2021}.
This has motivated a broad program of quantum algorithms for Hamiltonian simulation,
spanning product-formula, qubitization/quantum signal processing,
and linear-combination-of-unitaries (LCU)
\cite{trotter59,suzuki90,childs_product_formula_2019,childs_wiebe_lcu_2012,berry_truncated_taylor_2015,low_chuang_qubitization_2019,gilyen_qsvt_blockencoding_2019}.
Concrete proposals exist for simulating scalar field theories, gauge theories,
and related lattice models using digital quantum computers
\cite{jordan_lee_preskill_scattering_2012,preskill_qft_2018,klco_savage_scalar_digitization_2018,zohar_farace_reznik_cirac_digital_lgt_2017,bender_zohar_digital_lgt_3d_2018,klco_stryker_su2_digital_2020,ji_lamm_zhu_group_decimation_2020,illa_robin_savage_qu8its_2024}.

A recurring technical feature of these targets is the presence of local degrees of freedom
that are naturally infinite-dimensional (bosonic modes) or continuous/compact (gauge links),
requiring a truncation or digitization to a finite local Hilbert space to obtain a finite model
\cite{kogut_susskind_hamiltonian_lgt_1975,chandrasekharan_wiese_quantum_link_1997,zohar_formulation_lgt_qsim_2015,banuls_lgt_qtech_2020,ji_lamm_zhu_group_decimation_2020}.
For scalar fields, one common approach is a field-amplitude (position-basis) digitization
on a uniform grid \cite{jordan_lee_preskill_scattering_2012,klco_savage_scalar_digitization_2018,macridin_spentzouris_amundson_harnik_sampling_2018};
for gauge links and rotors, one often imposes representation cutoffs
or angular-momentum truncations
\cite{kogut_susskind_hamiltonian_lgt_1975,chandrasekharan_wiese_quantum_link_1997,zohar_formulation_lgt_qsim_2015,zohar_cirac_reznik_ultracold_gauge_2013,ji_lamm_zhu_group_decimation_2020}.
Importantly, physically relevant truncations can be modest
in some parameter regimes and model families---for example,
finite-group and quantum-link formulations, few-qubit Schwinger-model
realizations, and qudit gauge-field encodings with local dimensions such
as \(d=3\) and \(d=8\) 
\cite{chandrasekharan_wiese_quantum_link_1997,
zohar_farace_reznik_cirac_digital_lgt_2017,
martinez_schwinger_2016,
surace_rydberg_gauge_string_2020,
gonzalez_cuadra_nonabelian_qudits_2022,
illa_robin_savage_qu8its_2024}---making the structure of the finite-\(d\)
encoding consequential for end-to-end algorithmic cost.

These observations have motivated sustained interest in qudit-based quantum computation,
where the native information carrier is a \(d\)-level system rather than a qubit
\cite{wang_qudits_review_2020,gottesman_qudit_ft_1998,hostens_dehaene_de_moor_qudit_stabilizer_2005,farinholt_clifford_qudits_2014,campbell_anwar_browne_msd_prime_2012}.
From an algorithmic perspective, qudits can offer a more direct representation
of truncated local Hilbert spaces
\cite{wang_qudits_review_2020,illa_robin_savage_qu8its_2024,gonzalez_cuadra_nonabelian_qudits_2022},
and in some settings can reduce circuit depth or simplify arithmetic
by exploiting generalized Clifford structure
\cite{gottesman_qudit_ft_1998,hostens_dehaene_de_moor_qudit_stabilizer_2005,farinholt_clifford_qudits_2014,wang_qudits_review_2020}.
More broadly, qudit-based information processing has grown into an active area spanning gate-decomposition techniques, algorithm design, and quantum error correction, as surveyed in recent reviews~\cite{wang_qudits_review_2020,kiktenko_qudits_review_2025}.
From an implementation perspective, several leading hardware platforms
provide access to well-controlled multi-level structure,
enabling qudit gates or encoded qudits within a larger physical Hilbert space
\cite{ringbauer_universal_qudit_processor_2022,neeley_superconducting_phase_qudit_2009,nguyen_multidimensional_solid_state_qudit_2024,erhard_high_dim_entanglement_review_2020}.

Prevailing expectations for quantum advantage typically rely on fault-tolerant computation,
where error correction and compilation overhead dominate
\cite{bravyi_kitaev_msd_2005,fowler_surface_code_review_2012,litinski_game_surface_codes_2019,katabarwa_early_ftqc_2024}.
In this regime, resource estimates are often governed by the cost
of implementing non-Clifford operations in a discrete logical gate set
(e.g., via magic-state distillation)
\cite{bravyi_kitaev_msd_2005,ross_selinger_optimal_rz_2014,kliuchnikov_maslov_mosca_exact_2013,litinski_game_surface_codes_2019}.
While there is an extensive literature on qudit algorithms
and near-term qudit simulation prospects
\cite{wang_qudits_review_2020,banuls_lgt_qtech_2020,zohar_formulation_lgt_qsim_2015,gonzalez_cuadra_nonabelian_qudits_2022,illa_robin_savage_qu8its_2024,kurkcuoglu_lamm_maestri_qudit_decomp_2024, joshi2025efficientquditcircuitquench, joshi2025probinghadronscatteringlattice},
systematic fault-tolerant resource comparisons that track non-Clifford resources
across qudit and qubit encodings remain less standardized,
especially at general \(d\).

In this paper, we give a fault-tolerant resource analysis
for a representative and widely occurring problem class: onsite quadratic diagonal evolutions,
exemplified by $e^{-it\phi_x^2}$ for a digitized scalar field $\phi_x$
in the field-amplitude basis in position space
\cite{jordan_lee_preskill_scattering_2012,klco_savage_scalar_digitization_2018,preskill_qft_2018}.
While our primary presentation uses $\phi_x$ for concreteness, 
the fault-tolerant qubit--qudit comparison established in this work concerns onsite quadratic diagonal unitaries of the form $e^{-it\,g(N)}$. Here, $g$ is quadratic in a truncated integer-valued local operator $N$ (e.g., a number operator), as arises in lattice Hamiltonians \cite{kogut_susskind_hamiltonian_lgt_1975,zohar_formulation_lgt_qsim_2015,banuls_lgt_qtech_2020}.

We study two standard simulation paradigms.
In Regime 1, we compare a single product-formula (Trotter) step
implementing the onsite diagonal evolution using either one $d$-level qudit acting natively on the $d$-level truncation, or $n_b=\lceil \log_2 d\rceil$ qubits in the standard binary embedding of the same physical truncation
\cite{trotter59,suzuki90,childs_product_formula_2019}.
In Regime 2, we compare block-encodings in the LCU framework,
where the cost is set by the PREPARE/SELECT oracle implementations
\cite{childs_wiebe_lcu_2012,berry_truncated_taylor_2015,gilyen_qsvt_blockencoding_2019,low_chuang_qubitization_2019}.
Across both regimes, our metric of fault-tolerant cost is the non-Clifford count
after synthesis into a discrete logical gate set;
synthesis assumptions enter only through the cost of approximating
continuous one-parameter primitives (qubit $R_z$ rotations
and embedded two-level qudit $SU(2)$ rotations)
to a target precision $\varepsilon$
\cite{ross_selinger_optimal_rz_2014,kliuchnikov_maslov_mosca_exact_2013}.
Accordingly, the results should be interpreted as a compilation/resource-model comparison under a chosen fault-tolerant proxy, and not as a complete architecture-level comparison of total logical qudit versus logical qubit overhead.
This would additionally require architecture-specific assumptions about code conversion, state injection/distillation, and space–time costs.

Because tight, gate-set--dependent synthesis constants
for the required single-qudit embedded $SU(2)$ primitives
are not known uniformly in $d$,
we cannot claim closed-form costs for $d>3$.
Instead, we parameterize qudit synthesis by a logarithmic prefactor $a$
and report explicit break-even conditions
$a \lesssim a_{\max}^{\mathrm{PF}}(d,\varepsilon)$ in Regime~1
and $a \lesssim a_{\max}^{\mathrm{LCU}}(d,t,\varepsilon_{\mathrm{sim}})$ in Regime~2.
$a_{\max}^{\mathrm{PF}}(d,\varepsilon)$ and $a_{\max}^{\mathrm{LCU}}(d,t,\varepsilon_{\mathrm{sim}})$ are break-even compiler targets: they are the largest allowed values of the logarithmic synthesis prefactor $a$ of the relevant embedded two-level qudit primitives for which the qudit construction still uses fewer non-Clifford gates than the stated qubit baseline under our cost model, so a smaller value denotes a more demanding compilation requirement. They are formally defined and analyzed in Secs.~\ref{sec:regime1_crossover} and~\ref{sec:regime2_nocs_crossover}, respectively.

With this caveat, our contributions are as follows:
(1) we derive explicit finite-$d$ break-even synthesis thresholds for the
required embedded two-level qudit $SU(2)$ primitives in both regimes;
(2) we place these thresholds in context with asymptotic results under the
explicit models studied here (see Table~\ref{tab:asymptotic_summary}): in Regime~1, we show that asymptotic
qudit advantage would require an
exponentially strong per-primitive synthesis improvement over the qubit
baseline, while in Regime~2 we derive asymptotic non-Clifford scalings, for which the qubit
implementation is asymptotically cheaper; 
(3) as a secondary diagnostic, under an idealized free qubit-qudit code-switching assumption in Regime 2, we translate the resulting finite-$d$ savings into a conditional per-switch overhead budget for a hybrid route.

The asymptotic comparisons of contribution~(2) are unconditional 
consequences of the explicit constructions and cost models studied here: within these models the qubit encoding is
asymptotically no more expensive in non-Clifford count in both regimes, and in Regime~2 this holds for any fixed 
positive synthesis prefactor. 
By contrast, the finite-$d$ break-even results of contribution~(1), 
expressed through the thresholds $a_{\max}^{\mathrm{PF}}(d,\varepsilon)$ and $a_{\max}^{\mathrm{LCU}}(d,t,\varepsilon_{\mathrm{sim}})$, 
are conditional compiler-target statements: whether a finite-$d$ qudit advantage is realized depends on the achievable synthesis cost of
the embedded two-level qudit primitives, which is not yet known uniformly in $d$. The code-switching comparison of contribution~(3) 
is further conditional on an idealized negligible-overhead qubit--qudit code-switching model, and is therefore stated only as an 
allowable per-switch overhead budget rather than a demonstrated advantage.

The remainder of the work is organized as follows.
Section~\ref{sec:background} introduces the truncation model,
the qubit and qudit encodings,
and the embedded two-level $SU(2)$ primitives
used to express single-qudit implementations.
Section~\ref{sec:regime1} develops Regime~1 (product-formula) implementations:
Sections~\ref{sec:regime1_qubit} and~\ref{sec:regime1_qudit}
give the corresponding non-Clifford cost models for qubits and qudits, respectively,
Section~\ref{sec:regime1_asymptotics} studies the asymptotic synthesis behavior,
and Section~\ref{sec:regime1_crossover}
provides synthesis-parametrized crossover criteria.
Section~\ref{sec:regime2} develops Regime~2 (LCU/block-encoding) implementations:
Sections~\ref{sec:regime2_qubit} and~\ref{sec:regime2_qudit}
give the corresponding constructions and costs,
Section~\ref{sec:regime2_asymptotic} studies asymptotic behavior,
and Sections~\ref{sec:regime2_nocs_crossover} and~\ref{sec:regime2_codeswitch} 
analyze finite-$d$ encoding and hybrid finite-$d$ comparisons, respectively.
We discuss our findings in Section~\ref{sec:discussion}, and conclude in Section~\ref{sec:conclusions}.

\section{Problem Setup and Definitions}\label{sec:background}

A representative and widely occurring primitive in lattice Hamiltonian simulation
is onsite diagonal evolution generated by a quadratic function
of a local, digitized degree of freedom.
In this work, we use a digitized real scalar field operator $\phi_x$ at site $x$
as a concrete example,
focusing on the implementation of the single-site unitary $e^{-it\phi_x^2}$.
We encode $\phi_x$ using a uniform grid in the field-amplitude basis.
This digitization scheme is consistent with the standard formalism
for scalar fields in quantum simulation,
often referred to as the Jordan--Lee--Preskill (JLP) basis
~\cite{jordan_lee_preskill_scattering_2012,klco_savage_scalar_digitization_2018,Jordan:2011ci}.
Although our focus is strictly on the onsite mass term $\propto \phi_x^2$,
we note that the same ideas apply to terms $\propto \pi_x^2$,
since $\pi_x$ is diagonal in the Fourier-conjugate basis.
The qudit Fourier transform is Clifford~\cite{wang_qudits_review_2020},
so this basis change does not add non-Clifford cost
in the single-qudit setting;
by contrast, the $n_b$-qubit QFT is generally non-Clifford
and can add overhead in the qubit encoding.
Beyond our focus on real scalar fields---important in lattice $\phi^4$ theories
and dynamical-pion nuclear EFTs~\cite{watson2023quantumalgorithmssimulatingnuclear}---
operators of this form also arise in lattice gauge theories
~\cite{PhysRevA.73.022328,banuls_lgt_qtech_2020,Shaw:2020udc},
condensed matter and quantum optics
~\cite{2021PhRvL.126p3203C,PhysRevLett.81.3108},
and quantum chemistry~\cite{C9SC01313J}.

To truncate and discretize $\phi_x$,
we choose a maximum field value $\phi_{\max}$ and a grid spacing $\delta_\phi$.
Defining
\begin{equation}
    M = \frac{\phi_{\max}}{\delta_\phi} \in \mathbb{Z}, \qquad d = 2M+1,
\end{equation}
we label the basis states as $\{\vert n\rangle\}_{n=0}^{d-1}$.
Note that this symmetric grid implies odd $d$,
which we leave implicit throughout unless stated otherwise.
This bounded, uniformly spaced field-amplitude truncation is precisely the finite-dimensional local Hilbert-space digitization of the Jordan--Lee--Preskill (JLP) framework for quantum simulation of scalar field theories~\cite{jordan_lee_preskill_scattering_2012,Jordan:2011ci,klco_savage_scalar_digitization_2018}, to which the encoding used here directly corresponds.
The field operator on site $x$ is defined diagonally in this basis as:
\begin{equation}\label{eq:phi_operator}
    \phi_x = \sum_{n=0}^{d-1} \lambda_n\,\vert n\rangle\!\langle n\vert, \qquad
    \lambda_n = -\phi_{\max} + n\,\delta_\phi.
\end{equation}
A single $d$-level qudit naturally represents these $2M+1$ equally spaced values
\begin{equation}
    \{-\phi_{\max},\; -\phi_{\max} + \delta_\phi,\; \dots,\; +\phi_{\max}\}.
\end{equation}
For qubit implementations, we use an $n_b=\lceil \log_2 d\rceil$-qubit
register, but the precise embedding depends on the construction being
compared. In the product-formula setting of Sec.~\ref{sec:regime1}, we use the
standard binary embedding of the $d$ physical field-amplitude states
$\{\ket{n}\}_{n=0}^{d-1}$ into the first $d$ computational basis states
of the $2^{n_b}$-dimensional register. The digitized field operator is taken to act diagonally, with the remaining $2^{n_b}-d$ computational basis states lying outside the truncation and therefore unused.
In the LCU/block-encoding setting of Sec.~\ref{sec:regime2}, we instead use the
nearest signed-binary register supported by $n_b$ qubits, following the standard projector-LCU baseline of Refs.~\cite{Su_2021_PRXQ,Spagnoli_2025}. The qudit and qubit constructions are motivated by the same physical $d=2M+1$ target truncation: the qudit construction acts natively on the $d$-level field-amplitude truncation, while the qubit baseline implements a signed-binary extension of the rounded squared operator that agrees with the physical target on the physical interval $\ell\in\{-M,\ldots,M\}$. The Regime~2 finite-$d$ comparisons in this paper are therefore construction-specific comparisons between the native qudit LCU construction and the binary qubit projector-LCU.

To perform an arbitrary $SU(d)$ rotation on a single qudit,
one needs a universal basis of operations.
This basis can be constructed from a collection
of embedded $SU(2)$ rotations that couple different basis states.
The operators that generate these fundamental $SU(2)$ subalgebras
are defined as generalizations of the Pauli matrices
~\cite{Gustafson_2021}.
Let $\{|n\rangle\}_{n=0}^{d-1}$ be the computational basis.
For $0\le b<c\le d-1$,
define the embedded two-level generators
\begin{align}
X^{(b,c)} &= |b\rangle\!\langle c| + |c\rangle\!\langle b|,\\
Y^{(b,c)} &= -i\,|b\rangle\!\langle c| + i\,|c\rangle\!\langle b|,\\
Z^{(b,c)} &= |b\rangle\!\langle b| - |c\rangle\!\langle c|.
\end{align}
We can then define rotations
\begin{equation}
R_{G}^{(b,c)}(\theta) = \exp\!\left(-\tfrac{i\theta}{2}\,G^{(b,c)}\right), \qquad G\in\{X,Y,Z\},
\end{equation}
which act only on the $\{|b\rangle,|c\rangle\}$ subspace.

Throughout, unless stated otherwise,
we compare a single fault-tolerant $d$-level qudit
with a register of $n_b$ qubits encoding the field operator at site $x$.

Ideally, we would have a closed-form synthesis bound
for approximating the $d$-dimensional diagonal qudit gate $e^{-it\phi_x^2}$
to accuracy $\varepsilon$ using a fixed fault-tolerant qudit gate set.
For $\mathrm{SU}(p)$ with prime $p$,
a worst-case lower bound for Clifford+T synthesis is given in Ref.~\cite{PrakashKalraJain2021}.
However, although volume/covering arguments provide existential worst-case lower bounds
on $\varepsilon$-approximation in $SU(p)$ \cite{PrakashKalraJain2021},
these bounds do not certify that a particular structured diagonal,
such as $e^{-it\phi_x^2}$, is worst case.
Determining tight cost for a specific target typically requires finding
the nearest implementable exactly synthesizable unitary $V$
and the shortest word implementing it,
leading to NP-complete decision problems \cite{MANDERS1978168}.
Extending the single-qutrit ($d=3$) Clifford+$R$ synthesis analysis
of Ref.~\cite{gustafson2025synthesissinglequtritcircuits}
to arbitrary odd $d$ would require
(i) a dimension-$d$ normal form and exact-synthesis framework
for a fixed fault-tolerant gate set,
and (ii) dimension-specific number-theoretic/algorithmic analysis.
Accordingly, deriving a closed-form non-Clifford cost---or proving
that $e^{-it\phi_x^2}$ saturates generic lower bounds---for arbitrary $d$
is beyond our scope and is left for future work. 

Further, in the finite-$d$ compiler-target analysis in Sections~\ref{sec:regime1_crossover} and~\ref{sec:regime2_nocs_crossover}, we restrict to prime local dimensions.
This restriction reflects the current state of the fault-tolerant qudit-synthesis literature: Clifford+$T$ normal-form/synthesis results and qudit magic-state distillation constructions are most directly developed for prime-dimensional qudits~\cite{gottesman_qudit_ft_1998,hostens_dehaene_de_moor_qudit_stabilizer_2005,campbell_anwar_browne_msd_prime_2012,PrakashKalraJain2021,gustafson2025synthesissinglequtritcircuits}.

Therefore, for resource estimates,
we first express algorithmic costs in terms of continuous one-parameter rotations
(e.g., qubit $R_z$ rotations or embedded qudit $SU(2)$ rotations).
In a fault-tolerant setting,
these rotations must then be approximately synthesized
over a discrete logical gate set to precision $\varepsilon$.
Accordingly, the primary resource metric we compare throughout
is the resulting number of non-Clifford gates
(i.e., the non-Clifford primitives required by the discrete fault-tolerant gate set after synthesis),
and in Sec.~\ref{sec:regime2_codeswitch} we specifically compare T-counts.

\section{Regime 1: Time Evolution via Product Formulas (Trotterization)}\label{sec:regime1}

In this section, we compare the fault-tolerant cost of implementing a single product-formula step of the onsite unitary
$e^{-it \phi_x^2}$ in the qubit and qudit encodings.
We refer the reader to Ref.~\cite{ChildsSuTranWiebeZhu2021CommutatorScaling}
for background on product formulas.
For fixed formula order and target simulation parameters, the required step count $r$ is encoding independent, so the
relative resource comparison reduces to the non-Clifford cost per step at matched accuracy.

\subsection{The Qubit Implementation}\label{sec:regime1_qubit}

\begin{lemma}\label{prop:mass-trotter-qubit}
 A single Trotter step of $e^{-it \phi_x^2}$
 on a register of $n_b = \lceil \log_2 d \rceil$ qubits
 can be implemented using at most $\frac{n_b(n_b+1)}{2}$ $R_z$ gates.
\end{lemma}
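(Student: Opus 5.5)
We write the field operator in the binary embedding and expand $\phi_x^2$ as a quadratic polynomial in Pauli-$Z$ operators. With $n=\sum_{j=0}^{n_b-1}2^j x_j$ and $x_j=(1-Z_j)/2$, Eq.~\eqref{eq:phi_operator} gives
\[
\phi_x=-\phi_{\max}+\delta_\phi\sum_{j}2^j\frac{1-Z_j}{2}=c_0\,\mathbb{1}+\sum_j c_j Z_j ,
\]
with $c_0=-\phi_{\max}+\delta_\phi(2^{n_b}-1)/2$ and $c_j=-\delta_\phi 2^{j-1}$. The unused states $n\ge d$ are simply extra computational basis states on which this affine formula also holds, so the operator is diagonal on the whole $2^{n_b}$-dimensional register and agrees with the target on the physical subspace.

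Squaring, and using $Z_j^2=\mathbb{1}$, gives
\[
\phi_x^2=\alpha\,\mathbb{1}+\sum_j \beta_j Z_j+\sum_{j<k}\gamma_{jk}Z_jZ_k .
\]
All terms in this expansion mutually commute. Hence $e^{-it\phi_x^2}$ factorizes exactly into the following pieces:
\begin{itemize}
\item a global phase from $\alpha$, which is irrelevant;
\item $n_b$ single-qubit rotations $R_z(2t\beta_j)$;
\item $\binom{n_b}{2}$ two-qubit factors $e^{-it\gamma_{jk}Z_jZ_k}$.
\end{itemize}
Since the factorization is exact, a single Trotter step incurs no splitting error here.

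Each factor $e^{-i\theta Z_jZ_k}$ is implemented as $\mathrm{CNOT}_{jk}\,R_z^{(k)}(2\theta)\,\mathrm{CNOT}_{jk}$. The CNOTs are Clifford, so each such factor costs one $R_z$. The total is therefore
\[
n_b+\frac{n_b(n_b-1)}{2}=\frac{n_b(n_b+1)}{2}
\]
$R_z$ gates. The count is "at most" because some angles may be zero or Clifford multiples of $\pi/4$. For example, $\beta_j$ can vanish for a symmetric grid when $d=2^{n_b}$.

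No step is a real obstacle. The only point needing care is bookkeeping: checking that the padding states do not spoil diagonality or the claimed equality on the physical subspace. The affine formula $\lambda_n=-\phi_{\max}+n\delta_\phi$ extends naturally to all $n<2^{n_b}$, so any diagonal action on the unused states is acceptable.
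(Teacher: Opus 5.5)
Your proof is correct and takes the same approach as the paper's. Both use the binary expansion with $q_m=(1-Z^{(m)})/2$, square into mutually commuting $Z$ and $ZZ$ terms, and implement each $ZZ$ factor as one $R_z$ conjugated by CNOTs, for a total of $n_b+\binom{n_b}{2}$ rotations. One small correction to your aside: a vanishing linear coefficient at $d=2^{n_b}$ cannot occur in the paper's setting, because $d=2M+1$ is odd and so the constant term is $c_0=\delta_\phi(2^{n_b}-d)/2\neq 0$; this does not affect the upper bound.
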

\begin{proof}
 We exploit the binary structure of the register
 by writing the integer level index as $n=\sum_{m=0}^{n_b-1} 2^m q_m$
 with $q_m\in\{0,1\}$ \cite{watson2023quantumalgorithmssimulatingnuclear}.
 Mapping these bits to Pauli operators via $q_m=\frac{1-Z^{(m)}}{2}$,
 the field operator becomes
\begin{equation}
    \phi_x = P\mathbb{I} + Q\sum_{m=0}^{n_b-1}2^m Z^{(m)},
\end{equation}
with $P = -\phi_{\max} + \frac{\delta_\phi}{2}(2^{n_b}-1)$,
$Q = -\frac{\delta_\phi}{2}$.
Here $Z^{(m)}$ denotes the Pauli Z operator acting on the $m$th qubit of the register,
with $m=0$ corresponding to the least significant bit.
Squaring the operator yields
\begin{equation}
\begin{aligned}
\phi_x^2
&= \left(P\mathbb{I} + Q\sum_{m=0}^{n_b-1}2^m Z^{(m)}\right)^2 \\
&= P^2\mathbb{I}
+ 2PQ \sum_{m=0}^{n_b-1} 2^m Z^{(m)} \\
&\quad + Q^2 \sum_{m,m'=0}^{n_b-1} 2^{m+m'} Z^{(m)} Z^{(m')}.
\end{aligned}
\end{equation}
The time-evolution operator $e^{-it\phi_x^2}$ is therefore a product of commuting rotations:
\begin{equation}
\begin{aligned}
e^{-it\,\phi_x^2}
&=\exp\!\Bigl[-it\Bigl(P^2\mathbb{I}+2PQ\sum_{m=0}^{n_b-1}2^m Z^{(m)} \\
&\qquad\qquad\qquad\quad +Q^2\sum_{m,m'=0}^{n_b-1}2^{m+m'} Z^{(m)}Z^{(m')}\Bigr)\Bigr].
\end{aligned}
\end{equation}
This requires $n_b$ $R_z(\theta)$ rotations (from the linear terms)
and $\frac{n_b(n_b-1)}{2}$ two-qubit $ZZ$ rotations (from the quadratic terms).
Since $e^{-i\frac{\theta}{2}(Z \otimes Z)}$ decomposes into two CNOTs
and one $R_z(\theta)$ gate,
the total number of $R_z(\theta)$ gates is thus $\frac{n_b^2 + n_b}{2}$.
\end{proof}

\subsection{The Qudit Implementation}\label{sec:regime1_qudit}
In the qudit computational basis $\{|n\rangle\}_{n=0}^{d-1}$, the digitized field operator
$\phi_x$ is diagonal with eigenvalues $\lambda_n$ (as defined in Sec.~\ref{sec:background}).
It follows that
\begin{equation}
    \phi_x^2 = \sum_{n=0}^{d-1} \lambda_n^2 \, |n\rangle\langle n|
\end{equation}
and the corresponding onsite evolution is the diagonal unitary
\begin{equation}
    e^{-it\phi_x^2}=\sum_{n=0}^{d-1} e^{-it \lambda_n^2} \, |n\rangle\langle n|\,.
\end{equation}

\begin{lemma}\label{prop:su2_irred}
 Let $U=\sum_{n=0}^{d-1} e^{-i\beta_n}|n\rangle\langle n|$ be a diagonal unitary on a single $d$-level qudit.
 Then $U$ can be implemented \emph{up to a global phase} using at most $d-1$
 $R_{Z}^{(k,k+1)}\!\left(\theta_k\right)$ rotations.
 In particular, taking $\beta_n=t\lambda_n^2$ yields a decomposition of a single Trotter step $e^{-it\phi_x^2}$.
\end{lemma}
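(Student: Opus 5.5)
The approach is a parameter count followed by a triangular solve. A diagonal unitary on a $d$-level qudit, taken modulo global phase, is specified by $d-1$ relative phases. The $d-1$ nearest-neighbour generators $Z^{(k,k+1)}$, $k=0,\dots,d-2$, together with the identity span the space of real diagonal $d\times d$ matrices. Since all diagonal operators commute, the target can be written as
\begin{equation*}
U = e^{-i\gamma}\prod_{k=0}^{d-2} R_Z^{(k,k+1)}(\theta_k) = e^{-i\gamma}\exp\!\Big(-\tfrac{i}{2}\sum_{k=0}^{d-2}\theta_k Z^{(k,k+1)}\Big),
\end{equation*}
and the task is to show that the angles $\theta_k$ and the phase $\gamma$ can be chosen so that the diagonal entries match.

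First compute the diagonal of $\sum_k \theta_k Z^{(k,k+1)}$. Level $n$ receives $+\theta_n$ from $Z^{(n,n+1)}$ and $-\theta_{n-1}$ from $Z^{(n-1,n)}$, with the conventions $\theta_{-1}=\theta_{d-1}=0$. Equating phases level by level gives the linear system
\begin{equation*}
\beta_n = \gamma + \tfrac{1}{2}\left(\theta_n - \theta_{n-1}\right), \qquad n=0,\dots,d-1 .
\end{equation*}
This system is bidiagonal, so solve it by forward substitution:
\begin{equation*}
\theta_n = \theta_{n-1} + 2(\beta_n-\gamma), \qquad\text{hence}\qquad \theta_k = 2\sum_{n=0}^{k}(\beta_n-\gamma).
\end{equation*}
The only remaining constraint is the last equation, $\theta_{d-1}=0$. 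It forces $\gamma=\frac1d\sum_n\beta_n$, which is exactly the freedom allowed by ``up to a global phase.'' In other words, the traceless part of $\mathrm{diag}(\beta_n)$ is expanded in the basis $\{Z^{(k,k+1)}\}$, and this basis consists of $d-1$ linearly independent traceless diagonal matrices, so it spans the whole $(d-1)$-dimensional traceless diagonal space.

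Commutativity of the diagonal generators lets the exponential of the sum factor into a product of $d-1$ individual rotations. Any $\theta_k$ that vanishes (or equals a multiple of $4\pi$) can be dropped, which is why the statement reads ``at most'' $d-1$. Finally, specialize to $\beta_n=t\lambda_n^2$ with $\lambda_n$ from Eq.~\eqref{eq:phi_operator}. This gives explicit angles $\theta_k = 2t\sum_{n\le k}(\lambda_n^2-\overline{\lambda^2})$, where $\overline{\lambda^2}=\frac1d\sum_n\lambda_n^2$, and hence the claimed decomposition of a single Trotter step.

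No step here is a real obstacle. The point that needs care is the sign and endpoint bookkeeping in the bidiagonal system, in particular checking that the consistency condition at $n=d-1$ is absorbed entirely by the global phase $\gamma$ rather than over-constraining the angles.
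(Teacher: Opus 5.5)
Your proposal is correct and follows essentially the same route as the paper. Both arguments compute the net phase $\tfrac12(\theta_n-\theta_{n-1})$ on level $n$ with $\theta_{-1}=\theta_{d-1}=0$, absorb the mean $\bar\beta=\frac1d\sum_n\beta_n$ into a global phase, and solve the bidiagonal system to get $\theta_k=2\sum_{n\le k}(\beta_n-\bar\beta)$. Your remark that angles $\equiv 0\ (\mathrm{mod}\ 4\pi)$ can be dropped, which is what justifies ``at most'' $d-1$, is a small clarification the paper leaves implicit.
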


\begin{proof}
We first consider an arbitrary diagonal unitary $U=\sum_{n=0}^{d-1} e^{-i\beta_n}|n\rangle\langle n|$
and show how to implement it up to a global phase using adjacent $R_Z^{(k,k+1)}$ rotations; we then specialize to $\beta_n=t\lambda_n^2$.
First, each adjacent $Z$-Givens rotation $R_Z^{(k,k+1)}(\theta_k)$ acts as
\begin{equation}
R_Z^{(k,k+1)}(\theta_k)= e^{-i\theta_k/2}|k\rangle\langle k| + e^{+i\theta_k/2}|k+1\rangle\langle k+1|,
\end{equation}
where $k = 0,\dots,d-2$.
All such rotations are diagonal and therefore commute, so on level $|n\rangle$ of the qudit the net phase is $e^{-i \alpha_n}$, where
\begin{equation}\label{eq:su2irred_linearsystem}
\alpha_n \;=\; \tfrac{1}{2}\theta_n - \tfrac{1}{2}\theta_{n-1},
\qquad
\theta_{-1}=\theta_{d-1}=0.
\end{equation}
Since $\sum_{n=0}^{d-1}\alpha_n=\tfrac{1}{2}(-\theta_{-1}+\theta_{d-1})=0$, the product lies in $SU(d)$.
The diagonal phases we wish to implement, $(\beta_n)_{n=0}^{d-1}$,
need not satisfy $\sum_n \beta_n = 0$,
but they can be realized up to a global phase by writing
$e^{-i\beta_n}=e^{-i\bar\beta}\,e^{-i(\beta_n-\bar\beta)}$ with
\begin{equation}
\bar\beta \;=\; \frac{1}{d}\sum_{n=0}^{d-1}\beta_n.
\end{equation}
In our application to $e^{-it\phi_x^2}$ we have $\beta_n=t\lambda_n^2$.
Thus we choose the rotation angles as
\begin{equation}
\theta_k \;=\; 2\sum_{n=0}^{k}\!\bigl(\beta_n-\bar\beta\bigr)
\quad (\mathrm{mod}\ 4\pi),
\end{equation}
so that $\alpha_n=\beta_n-\bar\beta$, i.e., the implemented phase equals $e^{-i(\beta_n-\bar\beta)}$. Hence
\begin{equation}
\prod_{k=0}^{d-2} R_{Z}^{(k,k+1)}\!\left(\theta_k\right)
\end{equation}
implements $U$ up to a global phase, so one such decomposition with $d-1$ Givens rotations exists.
Specializing to $\beta_n=t\lambda_n^2$ yields $e^{-it\phi_x^2}$ up to a global phase.
Moreover, for any target phases $(\alpha_n)_{n=0}^{d-1}$ with $\sum_n \alpha_n = 0$,
the linear system in Eq.~\ref{eq:su2irred_linearsystem}
has the unique solution
\begin{equation}
\theta_k = 2\sum_{n=0}^{k}\alpha_n.
\end{equation}
Thus the angles $(\theta_k)$ that realize a given diagonal (up to global phase) are unique modulo $4\pi$.
\end{proof}

Note that under the symmetric uniform truncation with odd $d$
(see Eq.~\ref{eq:phi_operator}),
the centered partial sums admit the closed form
\begin{equation}
\sum_{n=0}^k(\lambda_n^2-\mu)
=
\phi_{\max}^2
\frac{4(k+1)}{3(d-1)^2}\Bigl(k-\frac{d-2}{2}\Bigr)\bigl(k-(d-1)\bigr),
\end{equation}
where $\mu=\frac{1}{d}\sum_{n=0}^{d-1}\lambda_n^2$.
Since $k\in\{0,\dots,d-2\}$ and $(d-2)/2$ is a half-integer for odd $d$,
this expression never vanishes.
Hence $\theta_k \not\equiv 0\ (\mathrm{mod}\ 4\pi)$ for generic $t$,
and therefore $\theta_k\neq 0$.

\subsection{Asymptotic Scaling Comparison}\label{sec:regime1_asymptotics}
In this section, we compare the non-Clifford cost of the qudit and qubit implementations.
On the qubit side, we consider a fixed fault-tolerant qubit gate set
and define $C^{(\mathrm{qb})}_{R_z}(\delta)$
as the non-Clifford synthesis cost of a single-qubit $R_z$ rotation to accuracy $\delta$.
On the qudit side, we likewise fix a (dimension-$d$) fault-tolerant qudit gate set
and let $C^{\mathrm{(qd)}}_{SU(2)}(\delta)$ denote the non-Clifford synthesis cost
to approximate a single embedded two-level one-parameter $SU(2)$ rotation
to operator-norm error at most $\delta$.
As is standard in fault-tolerant settings,
neither the qubit $R_z$ rotations nor the qudit embedded $SU(2)$ rotations
are treated as native primitives:
in both encodings, these are continuous one-parameter targets
that must be compiled into a finite discrete gate set,
and our resource metric counts only the resulting non-Clifford gates.

Lemmas~\ref{prop:mass-trotter-qubit} and~\ref{prop:su2_irred}
reduce the specific Regime~1 constructions 
to circuits containing a known number of one-parameter rotations.
This allows a gate-set--agnostic asymptotic comparison
between the qubit implementation of Lemma~\ref{prop:mass-trotter-qubit}
and the qudit implementation obtained from Lemma~\ref{prop:su2_irred}
followed by independent synthesis of the resulting embedded two-level
$SU(2)$ rotations under a uniform per-rotation error allocation.
Rather than asserting an explicit $C^{\mathrm{(qd)}}_{SU(2)}$,
we ask what relative synthesis performance of those qudit primitives
would be required for the qudit construction
to use fewer non-Cliffords than the qubit baseline
at the same overall precision.
Theorem~\ref{prop:regime1_comparison} states the resulting requirement.

\begin{theorem}\label{prop:regime1_comparison}
Let $U=e^{-it\phi_x^2}$ be a single-site Trotter step on a $d$-level qudit, and let $n_b=\lceil\log_2 d\rceil$.
Consider the qudit implementation obtained by Lemma~\ref{prop:su2_irred},
with each embedded two-level $SU(2)$ rotation synthesized independently under a uniform per-rotation error allocation.
If this qudit decomposition is to asymptotically use fewer non-Cliffords than the qubit implementation
of Lemma~\ref{prop:mass-trotter-qubit}, then the non-Clifford synthesis cost of the required one-parameter $SU(2)$ rotations
must be smaller than that of qubit-optimal $R_z$ synthesis in Clifford+T by a factor
\begin{equation}
\Theta\!\left(\frac{(\log d)^2}{d}\right)
=\Theta\!\left(\frac{n_b^2}{2^{n_b}}\right),
\end{equation}
i.e., by an exponentially strong advantage in $n_b$.
\end{theorem}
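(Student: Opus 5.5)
The plan is to compare total non-Clifford counts under a uniform error budget and ask what ratio of per-primitive synthesis costs makes the qudit side cheaper.

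\emph{Counting rotations and splitting the budget.} Fix an overall target precision $\varepsilon$ for the Trotter step.

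On the qubit side, Lemma~\ref{prop:mass-trotter-qubit} gives $N_{\mathrm{qb}}=n_b(n_b+1)/2=\Theta(n_b^2)$ rotations $R_z$. Because errors of products of unitaries add subadditively in operator norm, a uniform allocation $\delta_{\mathrm{qb}}=\varepsilon/N_{\mathrm{qb}}$ suffices. The qubit cost is then
\[
N_{\mathrm{qb}}\,C^{(\mathrm{qb})}_{R_z}\!\left(\varepsilon/N_{\mathrm{qb}}\right).
\]
For the qubit baseline I will use the optimal Clifford+T scaling $C^{(\mathrm{qb})}_{R_z}(\delta)=3\log_2(1/\delta)+O(1)$ of Ross--Selinger.

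On the qudit side, Lemma~\ref{prop:su2_irred} gives $N_{\mathrm{qd}}=d-1$ embedded rotations. The remark after that lemma, that no $\theta_k$ vanishes for generic $t$, shows none can be dropped, so $N_{\mathrm{qd}}=\Theta(d)=\Theta(2^{n_b})$. With allocation $\delta_{\mathrm{qd}}=\varepsilon/(d-1)$, the qudit cost is
\[
(d-1)\,C^{(\mathrm{qd})}_{SU(2)}\!\left(\varepsilon/(d-1)\right).
\]

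\emph{Deriving the required advantage.} Define the per-primitive advantage factor as the ratio of the qudit primitive cost to the qubit primitive cost at comparable precision, $\rho = C^{(\mathrm{qd})}_{SU(2)}/C^{(\mathrm{qb})}_{R_z}$. The qudit construction wins when
\[
(d-1)\,C^{(\mathrm{qd})}_{SU(2)}(\varepsilon/(d-1)) < \tfrac{n_b(n_b+1)}{2}\,C^{(\mathrm{qb})}_{R_z}(\varepsilon/N_{\mathrm{qb}}).
\]
Rearranging gives
\[
\rho \lesssim \frac{n_b(n_b+1)}{2(d-1)}\cdot \frac{\log(N_{\mathrm{qb}}/\varepsilon)}{\log((d-1)/\varepsilon)}.
\]
The prefactor satisfies $n_b(n_b+1)/(2(d-1))=\Theta((\log d)^2/d)$, using $n_b=\lceil\log_2 d\rceil$ so that $2^{n_b-1}<d\le 2^{n_b}$. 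This also yields the equivalent form $\Theta(n_b^2/2^{n_b})$.

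\emph{Main obstacle: the precision-dependent log ratio.} I expect the hardest step to be showing that the logarithmic ratio is $\Theta(1)$, so that it does not change the asymptotic order. I would bound it directly:
\begin{itemize}
\item The numerator is $\log(1/\varepsilon)+O(\log\log d)$.
\item The denominator is $\log(1/\varepsilon)+\log d$.
\end{itemize}
For fixed $\varepsilon$, or $\varepsilon\to 0$ at any rate, the ratio lies between constants times $1$ and $1/(1+\log d/\log(1/\varepsilon))$.

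This needs care. If $\varepsilon$ is held fixed while $d\to\infty$, the ratio behaves like $\log(1/\varepsilon)/\log d$ and would contribute an extra $1/\log d$ factor. I would therefore state the comparison at matched per-primitive precision, that is, both primitive costs evaluated at the same scale $\log(1/\delta)$. Equivalently, I would work in the standard regime $\log(1/\varepsilon)\gtrsim\log d$, where the ratio is $\Theta(1)$. This is the natural interpretation of ``smaller by a factor'' for per-primitive costs.

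Under that convention the threshold is $\Theta((\log d)^2/d)$, which decays exponentially in $n_b$, and this gives the claim.
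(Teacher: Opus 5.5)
Your counting ($d-1$ versus $n_b(n_b+1)/2$ rotations), uniform error allocation, and rearranged inequality match the paper's; the difference is entirely in how you read ``smaller by a factor.''

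\textbf{The paper's reading.} The paper stops at
\[
C^{(\mathrm{qd})}_{SU(2)}\!\left(\varepsilon/L^{(1)}_{\mathrm{qd}}\right)\lesssim \frac{L^{(1)}_{\mathrm{qb}}}{L^{(1)}_{\mathrm{qd}}}\,C^{(\mathrm{qb})}_{R_z}\!\left(\varepsilon/L^{(1)}_{\mathrm{qb}}\right).
\]
This compares each primitive's cost at its own allocated precision. The factor is literally $L^{(1)}_{\mathrm{qb}}/L^{(1)}_{\mathrm{qd}}=\Theta((\log d)^2/d)$. No functional form is assumed for the synthesis costs, and no relation between $\varepsilon$ and $d$ is needed, which is why the paper can conclude at fixed $\varepsilon$.

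\textbf{Your reading.} You normalize to a ratio of logarithmic prefactors. This commits you to a log-cost model and produces the extra factor $\log(L^{(1)}_{\mathrm{qb}}/\varepsilon)/\log(L^{(1)}_{\mathrm{qd}}/\varepsilon)$. It is the more meaningful per-primitive comparison: it is the asymptotic analogue of $a^{\mathrm{PF}}_{\max}/a^{\mathrm{PF}}_{R_z}$ in Sec.~\ref{sec:regime1_crossover}. Your handling of it is correct in the regime $\log(1/\varepsilon)\gtrsim\log d$.

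\textbf{Two corrections.} First, ``matched per-primitive precision'' is not equivalent to that regime. Evaluating both costs at a common $\delta$ reproduces the same logarithmic ratio, now entering as $C^{(\mathrm{qb})}_{R_z}(\varepsilon/L^{(1)}_{\mathrm{qb}})/C^{(\mathrm{qb})}_{R_z}(\varepsilon/L^{(1)}_{\mathrm{qd}})$. So it is the regime hypothesis alone that makes the ratio $\Theta(1)$.

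Second, at fixed $\varepsilon$ the numerator $\log(L^{(1)}_{\mathrm{qb}}/\varepsilon)$ grows like $2\log\log d$. The ratio is therefore $\Theta(\log\log d/\log d)$, not $\log(1/\varepsilon)/\log d$. Your required prefactor ratio then becomes $\Theta(\log d\,\log\log d/d)$. That is not $\Theta((\log d)^2/d)$, but it is still exponentially small in $n_b$, so the qualitative conclusion survives. It is also consistent with the per-step scalings in Table~\ref{tab:asymptotic_summary}.

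\textbf{Trade-off.} Your route gives a precision-normalized statement at the cost of an extra hypothesis on $\varepsilon$ versus $d$. The paper's literal reading is unconditional, but it compares primitives synthesized at different precisions.
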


\begin{proof}
By Lemma~\ref{prop:su2_irred}, $U=e^{-it\phi_x^2}$ uses $(d-1)$ nontrivial one-parameter embedded $SU(2)$ rotations; denote this count by
\begin{equation}
L^{(1)}_{\mathrm{qd}}(d) = d-1.
\end{equation}
Assume each such $SU(2)$ rotation is approximated using a fixed discrete gate set,
with synthesis cost $C^{\mathrm{(qd)}}_{SU(2)}(\delta)$
to achieve operator-norm error at most $\delta$.
Choosing
\begin{equation}
\delta \;=\; \frac{\varepsilon}{L^{(1)}_{\mathrm{qd}}(d)}
\end{equation}
ensures total error at most $\varepsilon$ by a standard triangle-inequality bound, so the qudit non-Clifford count satisfies
\begin{equation}
N^{\mathrm{PF}}_{\mathrm{qudit}}(d,\varepsilon)
\;\lesssim\;
L^{(1)}_{\mathrm{qd}}(d)\,C^{\mathrm{(qd)}}_{SU(2)}\!\left(\frac{\varepsilon}{L^{(1)}_{\mathrm{qd}}(d)}\right).
\end{equation}
By Lemma~\ref{prop:mass-trotter-qubit}, $U$ can be implemented using
\begin{equation}
L^{(1)}_{\mathrm{qb}}(d)
\;=\;
\frac{n_b^2+n_b}{2}
\end{equation}
single-qubit $R_z$ rotations on $n_b$ qubits. Allocating per-rotation accuracy
\begin{equation}
\delta_{\mathrm{qb}}
\;=\;
\frac{\varepsilon}{L^{(1)}_{\mathrm{qb}}(d)},
\end{equation}
the corresponding qubit non-Clifford count is
\begin{equation}
N^{\mathrm{PF}}_{\mathrm{qubit}}(d,\varepsilon)
\;=\;
L^{(1)}_{\mathrm{qb}}(d)\,
C^{(\mathrm{qb})}_{R_z}\!\left(\frac{\varepsilon}{L^{(1)}_{\mathrm{qb}}(d)}\right).
\end{equation}
Under the per-rotation accuracy model, a necessary condition for the qudit implementation of Lemma~\ref{prop:su2_irred} to use fewer non-Cliffords than the qubit implementation is
\begin{equation}
N^{\mathrm{PF}}_{\mathrm{qudit}}(d,\varepsilon)
\;\lesssim\;
N^{\mathrm{PF}}_{\mathrm{qubit}}(d,\varepsilon),
\end{equation}
i.e.,
\begin{equation}
L^{(1)}_{\mathrm{qd}}(d)\,C^{\mathrm{(qd)}}_{SU(2)}\!\left(\frac{\varepsilon}{L^{(1)}_{\mathrm{qd}}(d)}\right)
\;\lesssim\;
L^{(1)}_{\mathrm{qb}}(d)\,
C^{(\mathrm{qb})}_{R_z}\!\left(\frac{\varepsilon}{L^{(1)}_{\mathrm{qb}}(d)}\right).
\end{equation}
Rearranging yields
\begin{equation}\label{eq:su2_inequality}
C^{\mathrm{(qd)}}_{SU(2)}\!\left(\frac{\varepsilon}{L^{(1)}_{\mathrm{qd}}(d)}\right)
\;\lesssim\;
\frac{L^{(1)}_{\mathrm{qb}}(d)}{L^{(1)}_{\mathrm{qd}}(d)}\,
C^{(\mathrm{qb})}_{R_z}\!\left(\frac{\varepsilon}{L^{(1)}_{\mathrm{qb}}(d)}\right).
\end{equation}
Since $L^{(1)}_{\mathrm{qb}}(d)=\Theta((\log d)^2)$, the prefactor satisfies
\begin{equation}
\frac{L^{(1)}_{\mathrm{qb}}(d)}{d-1}
\;=\;
\Theta\!\left(\frac{(\log d)^2}{d}\right)
\;=\;
\Theta\!\left(\frac{n_b^2}{2^{n_b}}\right),
\end{equation}
which is exponentially small in $n_b=\lceil\log_2 d\rceil$.
Thus, for fixed $\varepsilon$, the Lemma~\ref{prop:su2_irred} decomposition plus independent-synthesis route can asymptotically beat the qubit implementation only if embedded $SU(2)$ rotations admit exponentially cheaper synthesis (as a function of $n_b$) than qubit $R_z$ rotations.
\end{proof}

For any fixed finite discrete gate set, approximating a continuous one-parameter subgroup to precision $\delta$ requires
$C^{\mathrm{(qd)}}_{SU(2)}(\delta)=\Omega(\log(1/\delta))$
by standard volume-counting arguments \cite{10.1063/1.1495899}.
Moreover, for common fault-tolerant gate sets admitting number-theoretic synthesis,
known procedures achieve $C_{\mathrm{prim}}(\delta)=O(\log(1/\delta))$
\cite{gustafson2025synthesissinglequtritcircuits, ross_selinger_optimal_rz_2014},
so there is no asymptotic room for an exponentially improving dependence
on $n_b=\lceil\log_2 d\rceil$.
Consequently, we view the exponential per-rotation advantage demanded
by Theorem~\ref{prop:regime1_comparison}
as incompatible with the standard fault-tolerant gate-synthesis landscape
and therefore do not expect the qudit encoding to asymptotically use fewer non-Cliffords than the qubit construction.

\subsection{Finite-d Crossover Analysis}\label{sec:regime1_crossover}

The $SU(2)$ decomposition still provides a meaningful comparison
at the level of the number of continuous one-parameter rotations required.
For example, in qutrit Clifford+$R$ synthesis,
Ref.~\cite{gustafson2025synthesissinglequtritcircuits}
focuses on approximating diagonal two-level rotations,
since these serve as the basic building blocks for $SU(3)$ synthesis.
Furthermore, as discussed above,
for any fixed finite discrete gate set,
approximating a continuous one-parameter subgroup to precision $\delta$ requires
$\Omega(\log(1/\delta))$ non-Cliffords
by standard volume-counting arguments \cite{10.1063/1.1495899}.
Therefore, both qubit rotations and embedded-qudit $SU(2)$ primitives
admit logarithmic-in-precision synthesis, $\Omega(\log(1/\delta))$,
and we can study the synthesis behavior required
of the fault-tolerant qudit implementation as the truncation $d$ grows.

Assume we wish to implement the onsite unitary to accuracy $\varepsilon$.
Starting from Eq.~\ref{eq:su2_inequality},
we model the cost of a fault-tolerant continuous one-parameter embedded $SU(2)$ rotation as
\begin{equation}
C^{\mathrm{(qd)}}_{SU(2)}\approx a\,\log_2(1/\delta)
\end{equation}
where $a>0$ is a constant prefactor that captures gate-set and compilation overheads.
For the qubit baseline,
we use the single-qubit $R_z$ synthesis model \cite{Kliuchnikov2023shorterquantum},
\begin{equation}
C^{(\mathrm{qb})}_{R_z}(\delta)=0.57\log_2(1/\delta)+8.83.
\end{equation}
Substituting $\delta=\varepsilon/L^{(1)}_{\mathrm{qd}}(d)$ for the qudit decomposition
and $\delta=\varepsilon/L^{(1)}_{\mathrm{qb}}(d)$ for the qubit construction
yields the approximate break-even condition
\begin{equation}
\begin{split}
L^{(1)}_{\mathrm{qd}}(d)\,a\,\log_2\!\left(\frac{L^{(1)}_{\mathrm{qd}}(d)}{\varepsilon}\right)
\\
\lesssim
L^{(1)}_{\mathrm{qb}}(d)\left[
0.57\log_2\!\left(\frac{L^{(1)}_{\mathrm{qb}}(d)}{\varepsilon}\right)+8.83
\right].
\end{split}
\end{equation}
Rearranging gives an explicit requirement on the qudit synthesis prefactor,
\begin{equation}
a \lesssim a^{\mathrm{PF}}_{\max}(d,\varepsilon)
=
\frac{
L^{(1)}_{\mathrm{qb}}(d)\left[0.57\log_2\!\left(L^{(1)}_{\mathrm{qb}}(d)/\varepsilon\right)+8.83\right]
}{
L^{(1)}_{\mathrm{qd}}(d)\log_2\!\left(L^{(1)}_{\mathrm{qd}}(d)/\varepsilon\right)
}.
\end{equation}

For $\varepsilon=10^{-6}$
(a representative value),
and restricting to prime local dimensions in the fixed-encoding comparison,
this yields
\begin{equation}
\begin{split}
a^{\mathrm{PF}}_{\max}(3,10^{-6}) \approx 1.51,\\
a^{\mathrm{PF}}_{\max}(5,10^{-6}) \approx 1.48,\\
a^{\mathrm{PF}}_{\max}(7,10^{-6}) \approx 0.96.
\end{split}
\end{equation}
The quantity $a^{\mathrm{PF}}_{\max}(d,\varepsilon)$ should be read as a compiler target:
if an embedded two-level qudit $SU(2)$ rotation can be synthesized with cost
$a<a^{\mathrm{PF}}_{\max}(d,\varepsilon)$,
then the qudit implementation uses fewer non-Clifford gates than the qubit baseline.

It is also useful to compare this compiler target with the effective prefactor that would reproduce the qubit
$R_z$ synthesis cost at the same primitive precision required of the qudit rotations,
\begin{equation}
a^{\mathrm{PF}}_{R_z}(d,\varepsilon)
=
\frac{
0.57\log_2\!\left(L^{(1)}_{\mathrm{qd}}(d)/\varepsilon\right)+8.83
}{
\log_2\!\left(L^{(1)}_{\mathrm{qd}}(d)/\varepsilon\right)
}.
\end{equation}
When
$a^{\mathrm{PF}}_{\max}(d,\varepsilon)>a^{\mathrm{PF}}_{R_z}(d,\varepsilon)$,
the qudit construction can tolerate embedded-$SU(2)$ synthesis that is worse than qubit
$R_z$ synthesis at the same primitive precision.
When
$a^{\mathrm{PF}}_{\max}(d,\varepsilon)<a^{\mathrm{PF}}_{R_z}(d,\varepsilon)$,
a genuine per-primitive qudit synthesis advantage is required.

For $\varepsilon=10^{-6}$,
we find $a^{\mathrm{PF}}_{\max}(d,\varepsilon)>a^{\mathrm{PF}}_{R_z}(d,\varepsilon)$
only for $d=3$ and $d=5$ among the computed prime dimensions.
Determining whether these compiler targets are attainable for $d\geq 3$
requires explicit synthesis estimates for the relevant embedded primitives.

\section{Regime 2: Time Evolution via Linear Combination of Unitaries (LCU)}\label{sec:regime2}
In the block-encoding/LCU framework,
the motivation for using qudits is stronger than in the continuous-rotation setting:
when we express the same diagonal operator as a linear combination of unitaries,
some terms that are non-Clifford on qubits can become Clifford on qudits.
Prominent examples include the $d$-level incrementer and the quantum Fourier transform,
both of which are Clifford operations on a single qudit
\cite{wang_qudits_review_2020}.
Our goal is to determine whether, at fixed physical precision,
increasing the local dimension
(i.e., using a $d$-level qudit instead of $\lceil\log_2 d\rceil$ qubits)
can yield a constant-factor or asymptotic reduction in non-Clifford resources.

We consider two comparisons within Regime~2: a fully fixed-encoding qudit-versus-qubit comparison, and a secondary idealized hybrid code-switching comparison. In both comparisons, the qubit baseline is the standard nearest-register signed-binary projector-LCU construction of Refs.~\cite{Su_2021_PRXQ,Spagnoli_2025} on $n_b=\lceil\log_2 d\rceil$ qubits, which implements the signed-binary extension of the rounded squared operator that agrees with the physical $d=2M+1$ target on $\ell\in\{-M,\ldots,M\}$. The main distinction between the two comparisons is that the fixed-encoding model asks what native
fault-tolerant qudit synthesis performance would be required for an
advantage, while the hybrid model obtains an absolute qubit
Clifford+$T$ comparison by allowing non-Clifford subroutines to be
implemented after idealized qubit--qudit code-switching. The finite-$d$ Regime~2 results below are therefore construction-specific comparisons between the native qudit LCU construction and this standard qubit baseline, and not optimal lower bounds over all possible qubit encodings of the truncation.

In the fixed-encoding comparison, the qudit implementation remains entirely in the qudit encoding, and we derive the corresponding break-even condition
$a \lesssim a_{\max}^{\mathrm{LCU}}(d,t,\varepsilon_{\mathrm{sim}})$.
In the hybrid comparison, an \(n_b=\lceil \log_2 d\rceil\)-qubit index register is
allowed to code-switch to a single logical \(d\)-level qudit and back, giving an
idealized absolute \(T\)-count comparison. Here, code-switching is modeled as an
isometry from the valid index subspace to \(\mathbb C^d\), followed by the
inverse isometry after the qudit operation. The invalid computational states
\(|d\rangle,\ldots,|2^{n_b}-1\rangle\) are assumed to be unpopulated in the
ideal circuit; the physical cost of implementing the isometry is not included,
except through the overhead-budget analysis in
Sec.~\ref{sec:regime2_codeswitch}.
For the purposes of this section, we focus on the hybrid qudit encoding, with details of the fixed-encoding implementation given in Appendix~\ref{app:reg2-crossover}.

Given a Hamiltonian $H_i$ expressed as a linear combination of unitaries,
\begin{equation}
    H_i = \sum_{j=0}^{M_i-1} h_{ij} U_{ij}, \qquad
    \lambda_i = \sum_{j=0}^{M_i-1} |h_{ij}|,
\end{equation}
one can define a preparation unitary $\text{PREP}_i$ and a selection unitary $\text{SELECT}_i$ by
\begin{align}
    \text{PREP}_i &: |0\rangle \mapsto
    \sum_{j=0}^{M_i-1} \sqrt{\frac{|h_{ij}|}{\lambda_i}}\,|j\rangle, \\
    \text{SELECT}_i &= \sum_{j=0}^{M_i-1} |j\rangle\!\langle j| \otimes U_{ij}.
\end{align}
Here the first register (prepared by $\text{PREP}_i$)
is the term-selection register, which we call the index register;
$\text{SELECT}_i$ acts on both this register and on the system register.
Projecting the index register onto $|0\rangle$ then yields a block-encoding of $H_i$
\cite{hardy2024optimizedquantumsimulationalgorithms, childs_wiebe_lcu_2012}:
\begin{equation}
    \langle 0|\,\text{PREP}_i^\dagger\,\text{SELECT}_i\,\text{PREP}_i\,|0\rangle
    = \frac{H_i}{\lambda_i}.
\end{equation}
This block-encoding can then be used as input
to standard Hamiltonian-simulation primitives
(e.g., qubitization/quantum signal processing
and its generalization via quantum singular value transformation)
to implement real-time evolution $e^{-i H_i t}$,
and more general functions of $H_i$,
with polylogarithmic dependence on target precision.
For a fault-tolerant resource comparison in Regime~2,
we focus on the dominant oracle costs of $\text{PREP}_i$ and $\text{SELECT}_i$,
since these set the query complexity and, in our constructions,
drive the leading non-Clifford gate counts.
For background on block-encodings and their use in Hamiltonian simulation,
see Refs.~\cite{childs_wiebe_lcu_2012,low_chuang_qubitization_2019,gilyen_qsvt_blockencoding_2019}.

\subsection{The Qubit Baseline}\label{sec:regime2_qubit}
The qubit baseline studied here is the standard nearest-register signed-binary projector-LCU construction of Refs.~\cite{Su_2021_PRXQ,Spagnoli_2025}. The physical target is the same $d=2M+1$ field-amplitude truncation as in the qudit construction; the qubit baseline implements the nearest signed-binary extension of the rounded squared operator on
\begin{equation}
n_b = \lceil \log_2 d\rceil
\end{equation}
qubits. Starting from the uniform field operator in Eq.~\ref{eq:phi_operator}, we define the centered field label
\begin{equation}
\ell = n - M.
\end{equation}
We label a computational basis state of the $n_b$-qubit register by a sign bit $\ell_{\mathrm{sign}}\in\{0,1\}$ and magnitude bits $\ell_{n_b-2},\ldots,\ell_0\in\{0,1\}$,
\begin{equation}
|\ell\rangle = |\ell_{\mathrm{sign}},\ell_{n_b-2},\ldots,\ell_0\rangle,
\end{equation}
and assign the integer field label
\begin{equation}
\ell = (-1)^{\ell_{\mathrm{sign}}}\sum_{r=0}^{n_b-2}2^r \ell_r,
\qquad
|\ell| = \sum_{r=0}^{n_b-2} 2^r \ell_r,
\end{equation}
so that
\begin{equation}
\ell \in \{-(2^{n_b-1}-1),\ldots,2^{n_b-1}-1\}.
\end{equation}
Note that the two computational strings with all magnitude bits zero (sign bit $\ell_{\mathrm{sign}}=0$ or $\ell_{\mathrm{sign}}=1$) are both assigned the field label $\ell=0$. The redundant negative-zero string is therefore not excluded; it is a second computational basis representative of $\ell=0$. Because the squared operator below depends only on the magnitude bits $(\ell_r)$, the sign bit is irrelevant whenever the magnitude is zero, and the negative-zero string is an invariant zero-eigenvalue state of the squared operator.

Rather than restricting to the physical interval $\{-M,\ldots,M\}$ and enforcing a physical-subspace projector, we work directly with the full $n_b$-qubit register. With a slight abuse of notation, we denote by $\phi_x^2$ the corresponding rounded diagonal operator defined on every computational basis state of the register,
\begin{equation}
\phi_x^2
=
\delta_\phi^2
\sum_{\ell_{\mathrm{sign}},\,\ell_{n_b-2},\ldots,\ell_0\in\{0,1\}}
\!\!\!\!\!\!
\ell^2\,
|\ell\rangle\langle \ell|,
\end{equation}
where $\ell$ on the right-hand side is the integer label assigned above. By construction, this operator agrees with the physical target on the physical interval $\ell\in\{-M,\ldots,M\}$ and is the standard signed-binary extension to the full register used throughout Refs.~\cite{Su_2021_PRXQ,Spagnoli_2025}. Equivalently, indexing by the integer label,
\begin{equation}
\phi_x^2
=
\delta_\phi^2
\sum_{\ell=-(2^{n_b-1}-1)}^{2^{n_b-1}-1}
\ell^2 |\ell\rangle\langle \ell|,
\end{equation}
with the understanding that $\ell=0$ is realized by both sign-bit values.
Since the sign bit drops out upon squaring, we have
\begin{equation}
\ell^2
=
\left(\sum_{r=0}^{n_b-2}2^r \ell_r\right)^2
=
\sum_{r,s=0}^{n_b-2}2^{r+s}\ell_r\ell_s.
\end{equation}
Accordingly, for each pair $(r,s)$ we define the diagonal bit-projector on the full $n_b$-qubit register,
\begin{equation}
P_{r,s}
=
\!\!
\sum_{\ell_{\mathrm{sign}},\ell_{n_b-2},\ldots,\ell_0\in\{0,1\}}
\!\!\!\!\!\!\!\!\!
\ell_r\ell_s\,|\ell\rangle\langle \ell|,
\end{equation}
which depends only on the magnitude bits and is therefore insensitive to the negative-zero redundancy. Because each bit $\ell_r$ is either $0$ or $1$, the product $\ell_r\ell_s$ is equal to $1$ exactly when the $r$th and $s$th non-sign bits are both equal to $1$, and is otherwise $0$. Hence
\begin{equation}
P_{r,s}|\ell\rangle
=
\begin{cases}
|\ell\rangle, & \ell_r=\ell_s=1,\\
0, & \text{otherwise},
\end{cases}
\end{equation}
so $P_{r,s}$ is the projector onto the span of basis states whose selected non-sign bits are both equal to $1$. Substituting the bit expansion of $\ell^2$ into the diagonal operator gives
\begin{equation}\label{eq:phi_qubit_expansion}
\phi_x^2
=
\delta_\phi^2
\sum_{r,s=0}^{n_b-2}2^{r+s}P_{r,s}.
\end{equation}
Since the operators $P_{r,s}$ are projectors rather than unitaries, we give a block encoding of $\phi_x^2$ using the projector-based LCU methods of Refs.~\cite{Su_2021_PRXQ,Spagnoli_2025}. The PREP unitary acts on the index and ancilla registers and does not move the system register out of any computational-basis subspace, while the SELECT unitary is diagonal in the signed-binary computational basis (it conditions on the index pair $(r,s)$ and applies a bit-test phase on the system); both are therefore well-defined on every computational basis state of the $n_b$-qubit Hilbert space.

\begin{lemma}\label{lem:qubit_lcu_phi2}
The operator $\phi_x^2$ admits a block encoding with normalization factor
\begin{equation}
\alpha = \delta_\phi^2\left(2^{n_b-1}-1\right)^2
\end{equation}
on an $n_b$-qubit system register, using two index registers of $\lceil \log_2(n_b-1)\rceil$ qubits each and one additional ancilla qubit. Equivalently, there exist unitaries $\mathrm{PREP}$ and $\mathrm{SELECT}$ such that
\begin{equation}
\bigl(\langle 0| \otimes I\bigr)\,
\mathrm{PREP}^\dagger \,\mathrm{SELECT}\,\mathrm{PREP}\,
\bigl(|0\rangle \otimes I\bigr)
=
\frac{\phi_x^2}{\alpha},
\end{equation}
where $|0\rangle$ denotes the joint initial state of the two index registers and the ancilla qubit. Moreover, the $T$ count for one block-encoding call is
\begin{equation}
T_{\phi_x^2}^{\mathrm{LCU}}(\varepsilon)
=
32b_r+24n_b-116,
\end{equation}
where
\begin{equation}
b_r
=
\left\lceil
\frac{1}{2}
\log_2\!\left(\frac{9\pi^2}{2\varepsilon}\right)
\right\rceil.
\end{equation}
\end{lemma}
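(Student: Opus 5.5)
The proof has two parts: first the block-encoding identity with its normalization, then the $T$-count.

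\emph{Block-encoding identity.} I would build the projector-LCU directly from the expansion Eq.~\eqref{eq:phi_qubit_expansion}, $\phi_x^2=\delta_\phi^2\sum_{r,s=0}^{n_b-2}2^{r+s}P_{r,s}$. The coefficients factor as $2^r\cdot 2^s$. So PREP can be a tensor product of two identical single-register state preparations, each on $\lceil\log_2(n_b-1)\rceil$ qubits:
\begin{equation}
|0\rangle\mapsto \sum_{r=0}^{n_b-2}\sqrt{2^r/(2^{n_b-1}-1)}\,|r\rangle .
\end{equation}
The normalization is then $\delta_\phi^2\bigl(\sum_r 2^r\bigr)^2=\delta_\phi^2(2^{n_b-1}-1)^2=\alpha$, which matches the claim and equals the largest eigenvalue $\delta_\phi^2\ell_{\max}^2$.

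Each $P_{r,s}$ is a projector, not a unitary, so I would use the standard trick of Refs.~\cite{Su_2021_PRXQ,Spagnoli_2025} and write $P_{r,s}=\tfrac12(I-(I-2P_{r,s}))$. Equivalently, I would use the extra ancilla qubit: compute $\ell_r\wedge\ell_s$ into it, conditioned on the index values $(r,s)$, and treat that ancilla as part of the flagged $|0\rangle$ block. Then $\langle 0|\mathrm{SELECT}|0\rangle$ restricted to index $(r,s)$ equals $P_{r,s}$. Sandwiching between PREP and PREP$^\dagger$ gives $\sum_{r,s}\frac{2^{r+s}}{(2^{n_b-1}-1)^2}P_{r,s}=\phi_x^2/\alpha$.

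Two checks are needed here. The sign bit and the negative-zero string never enter, because $P_{r,s}$ depends only on the magnitude bits. The ancilla must be uncomputed, or the projector formulation must be written so that the postselected block is exactly $P_{r,s}$. I would verify this by direct evaluation on basis states $|\ell\rangle$.

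\emph{$T$-count.} I would add up the costs of the standard components.
\begin{enumerate}
\item The two PREP state preparations of amplitudes over $n_b-1$ values use rotations or QROM-based coherent alias sampling. Their bits of precision are $b_r$, set so that the rotation-synthesis error stays below $\varepsilon$. The constant $9\pi^2/(2\varepsilon)$ suggests a bound of the form $\|\text{error}\|\le c\,\pi^2 2^{-2b_r}$, with several rotations contributing, so that $2^{2b_r}\ge 9\pi^2/(2\varepsilon)$.
\item PREP$^\dagger$ contributes the same cost again.
\item SELECT is a unary iteration over the index pair. It uses a pair of controlled bit-tests, each a Toffoli-like AND on $\ell_r,\ell_s$ with the index controls, and costs $4$ $T$ per Toffoli. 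Unary iteration over $n_b-1$ values costs about $4(n_b-1)$ $T$ per register.
\end{enumerate}
Collecting terms should give a piece linear in $b_r$ (coefficient $32$, i.e.\ $4$ rotation-like blocks of $8b_r$, or equivalently $b_r$-bit comparators and adders costing $4b_r$ each, appearing $8$ times) plus a piece linear in $n_b$ (coefficient $24$), with the negative constant $-116$ coming from boundary savings in unary iteration and from measurement-based uncomputation.

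\emph{Main obstacle.} The hard step is reproducing the exact constants $32$, $24$, $-116$ and the precision rule for $b_r$. These depend on the specific gadget choices in Refs.~\cite{Su_2021_PRXQ,Spagnoli_2025}: how the state preparation is done with $b_r$-bit rotation angles via phase-gradient addition, and how unary iteration and uncomputation are counted. My plan is to adopt those references' accounting verbatim: cite their $T$-count for preparing $\sum_r\sqrt{2^r}|r\rangle$ to $b_r$ bits and for the bit-test SELECT, instantiate it with $n_b-1$ terms per index register, and simplify. I would justify $b_r$ by propagating the per-angle rounding error $\pi 2^{-b_r}$ through the amplitude error to the operator-norm block-encoding error, which gives the quadratic dependence inside the square root.
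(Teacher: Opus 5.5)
Your proposal follows the paper's proof: the same factorized weighted index state with $\Lambda=\bigl(\sum_{r=0}^{n_b-2}2^r\bigr)^2=(2^{n_b-1}-1)^2$, the same one-ancilla realization of $P_{r,s}=\tfrac12\bigl(I+(2P_{r,s}-I)\bigr)$ (the paper prepares the ancilla in $|+\rangle$ and applies $Z$ to it unless $\ell_r=\ell_s=1$), and a $T$-count quoted from Refs.~\cite{Su_2021_PRXQ,Spagnoli_2025}, namely $4b_r+2n_b-16$ Toffolis for each of $\mathrm{PREP}$ and $\mathrm{PREP}^\dagger$ and $2(n_b-1)$ Toffolis plus $20$ $T$ gates for $\mathrm{SELECT}$, at $4T$ per Toffoli. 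Discard your ``equivalently'' flag variant as written: writing $\ell_r\wedge\ell_s$ into a $|0\rangle$ ancilla and projecting onto $\langle 0|$ gives $I-P_{r,s}$, so you would need to write the negation instead, and uncomputing the flag without an intervening phase gives $I$.
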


\begin{proof}
Following Refs.~\cite{Su_2021_PRXQ,Spagnoli_2025}, we implement the projectors $P_{r,s}$ within a block encoding by introducing a single ancilla qubit $b$ prepared in the state
\begin{equation}
|+\rangle_b = \frac{|0\rangle_b + |1\rangle_b}{\sqrt{2}},
\end{equation}
together with the weighted index-register state
\begin{equation}
|\psi\rangle_{rs}
=
\frac{1}{\sqrt{\Lambda}}
\sum_{r,s=0}^{n_b-2} 2^{(r+s)/2} |r,s\rangle,
\end{equation}
where
\begin{equation}
\Lambda
=
\sum_{r,s=0}^{n_b-2} 2^{r+s}.
\end{equation}
Let $\mathrm{PREP}$ be the unitary defined by
\begin{equation}
\mathrm{PREP}\,|0\rangle_{rs}|0\rangle_b
=
|\psi\rangle_{rs}|+\rangle_b.
\end{equation}

For each pair $(r,s)$, define a unitary $\mathrm{SELECT}_{r,s}$ acting on the ancilla and signed-binary field register by
\begin{equation}
\mathrm{SELECT}_{r,s}\,
|b\rangle|\ell\rangle
=
(-1)^{\,b\,(\ell_r\ell_s\oplus 1)}
|b\rangle|\ell\rangle.
\end{equation}
Equivalently, conditioned on the chosen pair $(r,s)$, the circuit queries the $r$th and $s$th non-sign bits of the signed-binary label and applies a $Z$ to the ancilla unless both selected bits are equal to $1$. Therefore, for every basis state $|\ell\rangle$,
\begin{equation}
(\langle +|_b \otimes I)\,
\mathrm{SELECT}_{r,s}\,
(|+\rangle_b \otimes |\ell\rangle)
=
\ell_r \ell_s\,|\ell\rangle,
\end{equation}
and hence
\begin{equation}
(\langle +|_b \otimes I)\,
\mathrm{SELECT}_{r,s}\,
(|+\rangle_b \otimes I)
=
P_{r,s}.
\end{equation}
Thus the desired projector appears as the $|+\rangle_b$-to-$|+\rangle_b$ matrix element of a unitary.

The normalization is
\begin{equation}
\Lambda
=
\sum_{r,s=0}^{n_b-2} 2^{r+s}
=
\left(\sum_{r=0}^{n_b-2} 2^r\right)^2
=
\left(2^{n_b-1}-1\right)^2.
\end{equation}
Next define the full selection unitary
\begin{equation}
\mathrm{SELECT}
=
\sum_{r,s=0}^{n_b-2}
|r,s\rangle\langle r,s|
\otimes
\mathrm{SELECT}_{r,s},
\end{equation}
which applies the corresponding bit-test unitary conditioned on the index pair $(r,s)$. Sandwiching this operator between the weighted index state and the ancilla state gives
\begin{equation}
\begin{split}
(\langle\psi|_{rs}\langle +|_b \otimes I)\,\\
\mathrm{SELECT}\,
(|\psi\rangle_{rs}|+\rangle_b \otimes I)
=
\frac{1}{\Lambda}
\sum_{r,s=0}^{n_b-2}
2^{r+s} P_{r,s}.
\end{split}
\end{equation}
Using the projector decomposition above,
\begin{equation}
\phi_x^2
=
\delta_\phi^2
\sum_{r,s=0}^{n_b-2} 2^{r+s} P_{r,s},
\end{equation}
we obtain
\begin{equation}
(\langle\psi|_{rs}\langle +|_b \otimes I)\,
\mathrm{SELECT}\,
(|\psi\rangle_{rs}|+\rangle_b \otimes I)
=
\frac{\phi_x^2}{\delta_\phi^2 \Lambda}.
\end{equation}
Therefore this construction gives an exact qubit block encoding of $\phi_x^2$ with normalization factor
\begin{equation}
\alpha
=
\delta_\phi^2 \Lambda
=
\delta_\phi^2 \left(2^{n_b-1}-1\right)^2.
\end{equation}
Equivalently,
\begin{equation}
\bigl(\langle 0|_{rs}\langle 0|_b \otimes I\bigr)\,
\mathrm{PREP}^\dagger
\mathrm{SELECT}
\mathrm{PREP}
\bigl(|0\rangle_{rs}|0\rangle_b \otimes I\bigr)
=
\frac{\phi_x^2}{\alpha},
\end{equation}
which is the claimed exact block encoding at the oracle level.

For the fault-tolerant cost, we specialize the kinetic-term construction of Refs.~\cite{Su_2021_PRXQ,Spagnoli_2025} to a single signed-binary field register. One $\mathrm{PREP}$ oracle for the weighted bit-pair state has Toffoli count
\begin{equation}
T_{\mathrm{prep}}^{(\mathrm{Toff})}
=
4 b_r + 2 n_b - 16,
\qquad
b_r
=
\left\lceil
\frac{1}{2}
\log_2\!\left(\frac{9\pi^2}{2\varepsilon}\right)
\right\rceil,
\end{equation}
and $\mathrm{PREP}^\dagger$ has the same cost. The selector consists only of the two bit queries and the final controlled phase, giving
\begin{equation}
T_{\mathrm{sel}}^{(\mathrm{Toff})}
=
2(n_b-1),
\qquad
T_{\mathrm{sel}}^{(T)}
=
20.
\end{equation}
Using the convention $1$ Toffoli $= 4T$~\cite{PhysRevA.87.022328}
, the full block-encoding call
\begin{equation}
W
=
(\mathrm{PREP}^\dagger \otimes I)\,
\mathrm{SELECT}\,
(\mathrm{PREP} \otimes I)
\end{equation}
therefore has total $T$ count
\begin{align}
T_{\phi_x^2}^{\mathrm{LCU}}(\varepsilon)
&=
4\Bigl(
T_{\mathrm{prep}}^{(\mathrm{Toff})}
+
T_{\mathrm{prep}}^{(\mathrm{Toff})}
+
T_{\mathrm{sel}}^{(\mathrm{Toff})}
\Bigr)
+
T_{\mathrm{sel}}^{(T)}
\\
&=
32 b_r + 24 n_b - 116.
\end{align}
\end{proof}
\subsection{The Qudit Implementation}\label{sec:regime2_qudit}
For the qudit case, we first define the generalized Pauli \(Z\) on a \(d\)-level qudit
\cite{wang_qudits_review_2020}
\begin{equation}
    Z_d \;=\; \sum_{j=0}^{d-1} \omega^{\,j}\,|j\rangle\!\langle j|, 
\end{equation}
where \(\omega = e^{2\pi i/d}\), which belongs to the qudit Clifford group.

\subsubsection{Oracle Definitions}\label{sec:regime2_qudit_background}

\begin{lemma}\label{prop:qudit-pauli-expansion}
The squared field operator $\phi_x^2$ admits a unique expansion in the generalized Pauli basis as
\begin{equation}
\phi_x^2 = \sum_{r=0}^{d-1} \beta_r Z_d^r,
\end{equation}
with
\begin{equation}\label{eq:coeff_beta_decomp}
\begin{aligned}
\beta_0 &= \phi_{\max}^2\,\frac{d+1}{3(d-1)}, \\
\beta_r &= \frac{2\phi_{\max}^2}{(d-1)^2}\,
e^{i\pi r/d}\,\frac{\cos(\pi r/d)}{\sin^2(\pi r/d)},
\quad r = 1,\dots,d-1.
\end{aligned}
\end{equation}
This decomposition is irreducible in the sense that $\beta_r \neq 0$ for all $r \in \{1, \dots,
d-1\}$, so $\phi_x^2$ has full support on the non-trivial generalized Pauli operators
$\{Z_d^r\}_{r=1}^{d-1}$.
\end{lemma}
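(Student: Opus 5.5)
The plan is to treat this as a discrete Fourier expansion of a diagonal operator. The operators $\{Z_d^r\}_{r=0}^{d-1}$ are diagonal. They are orthogonal under the Hilbert--Schmidt inner product, since $\mathrm{Tr}(Z_d^{-r}Z_d^{s}) = \sum_j \omega^{(s-r)j} = d\,\delta_{rs}$. Hence they form a basis of the $d$-dimensional space of diagonal matrices, and the diagonal operator $\phi_x^2$ has a unique expansion with coefficients
\begin{equation}
\beta_r = \frac{1}{d}\mathrm{Tr}\!\left(Z_d^{-r}\phi_x^2\right) = \frac{1}{d}\sum_{j=0}^{d-1}\omega^{-rj}\lambda_j^2 .
\end{equation}
This settles existence and uniqueness, and reduces the lemma to evaluating these sums. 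I will use $\lambda_j=\delta_\phi(j-M)$ with $\delta_\phi=\phi_{\max}/M=2\phi_{\max}/(d-1)$.

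For the evaluation I will recentre the sum using $\ell=j-M\in\{-M,\dots,M\}$, which gives
\begin{equation}
\beta_r=\frac{\delta_\phi^2}{d}\,\omega^{-rM}\sum_{\ell=-M}^{M}\ell^2\cos\!\left(\frac{2\pi r\ell}{d}\right).
\end{equation}
The sine part cancels by symmetry in $\ell$. Since $M=(d-1)/2$, the prefactor is $\omega^{-rM}=(-1)^r e^{i\pi r/d}$, and this produces the phase in the statement.

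For $r=0$ the remaining sum is $2\sum_{\ell=1}^M\ell^2=M(M+1)(2M+1)/3$. This gives $\beta_0=\delta_\phi^2M(M+1)/3=\phi_{\max}^2(d+1)/(3(d-1))$.

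For $r\neq0$ I will write the sum as $-D''(x)$ at $x=2\pi r/d$, where $D(x)=\sum_{\ell=-M}^{M}\cos(\ell x)=\sin(dx/2)/\sin(x/2)$ is the Dirichlet kernel. This evaluation is the main obstacle, so I will organise it carefully. Write $D=N/s$ with $N=\sin(dx/2)$ and $s=\sin(x/2)$. At $x=2\pi r/d$ we have $N=0$ and $N''=-(d/2)^2N=0$, while $s\neq0$ because $0<r<d$. The quotient-rule formula for $D''$ then collapses to $-2N's'/s^2$. Substituting $N'=(d/2)(-1)^r$ and $s'=\tfrac12\cos(\pi r/d)$ gives
\begin{equation}
\sum_{\ell=-M}^{M}\ell^2\cos\!\left(\frac{2\pi r\ell}{d}\right)=\frac{d}{2}(-1)^r\frac{\cos(\pi r/d)}{\sin^2(\pi r/d)} .
\end{equation}
Multiplying by $\delta_\phi^2(-1)^re^{i\pi r/d}/d$, the signs $(-1)^r$ cancel and $\delta_\phi^2/2=2\phi_{\max}^2/(d-1)^2$. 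This reproduces Eq.~\eqref{eq:coeff_beta_decomp}. As a cross-check, I will verify the $d=3$ case directly.

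Finally, for irreducibility, $e^{i\pi r/d}$ and $\sin^2(\pi r/d)$ are nonzero for $1\le r\le d-1$. The factor $\cos(\pi r/d)$ vanishes only if $r=d/2$, which is impossible because $d$ is odd. Hence $\beta_r\neq0$ for every $r\in\{1,\dots,d-1\}$.
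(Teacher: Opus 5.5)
Your proposal is correct and follows the same route as the paper: expand the diagonal operator in the orthogonal basis $\{Z_d^r\}_{r=0}^{d-1}$, identify $\beta_r$ as the discrete Fourier transform of $(\lambda_n^2)$, evaluate it, and use odd $d$ to rule out $\cos(\pi r/d)=0$. The paper only asserts the closed form ``by direct evaluation,'' whereas your recentring and Dirichlet-kernel computation actually supplies that step. It checks out, including the phase $\omega^{-rM}=(-1)^r e^{i\pi r/d}$, the cancellation of the $(-1)^r$ factors, and the $d=3$ cross-check.
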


\begin{proof}
The set $\{Z_d^r\}_{r=0}^{d-1}$ forms a complete orthogonal basis for the $d$-dimensional vector
space of diagonal operators on $\mathbb{C}^d$ \cite{Watrous_2018}. Consequently, any diagonal
operator admits a unique expansion
\begin{equation}
\phi_x^2 = \sum_{r=0}^{d-1} \beta_r Z_d^r,
\end{equation}
with eigenvalues related by
\begin{equation}
\lambda_n^2 = \sum_{r=0}^{d-1} \beta_r \omega^{rn},
\end{equation}
i.e., the eigenvalue sequence $(\lambda_n^2)_{n=0}^{d-1}$ is the inverse discrete Fourier transform
of $(\beta_r)_{r=0}^{d-1}$. Therefore, the coefficients $\beta_r$ are given by the discrete Fourier
transform
\begin{equation}
\beta_r = \frac{1}{d} \sum_{n=0}^{d-1} \lambda_n^2 \omega^{-rn}.
\end{equation}
For the uniformly spaced grid
\(
\lambda_n = -\phi_{\max} + \frac{2\phi_{\max}}{d-1}n
\),
a direct evaluation of this sum yields
\begin{equation}
\beta_0 = \phi_{\max}^2\,\frac{d+1}{3(d-1)},
\end{equation}
and, for $r \neq 0$,
\begin{equation}\label{eq:beta_r}
\beta_r = \frac{2\phi_{\max}^2}{(d-1)^2}\,
e^{i\pi r/d}\,\frac{\cos(\pi r/d)}{\sin^2(\pi r/d)}.
\end{equation}
Since $\lambda_n^2 \in \mathbb{R}$ for all $n$, the Fourier coefficients obey
$\beta_{d-r} = \beta_r^*$, which ensures Hermiticity of $\phi_x^2$.

Finally, for $d$ odd and $r \in \{1,\dots,d-1\}$ we have $\sin(\pi r/d) \neq 0$ and
$\cos(\pi r/d) \neq 0$, so the real prefactor
is non-zero. Hence $\beta_r \neq 0$ for all $r \neq 0$, implying that $\phi_x^2$ has full support on
$\{Z_d^r\}_{r=1}^{d-1}$ and cannot be represented with fewer non-trivial generalized Pauli terms.
\end{proof}

Note that both the ``full support'' and the associated minimality 
statements are understood relative to the generalized-Pauli 
basis expansion used here, rather than claiming optimality over all possible LCU decompositions
or over alternative operator bases, which could in principle yield different representations.

\begin{definition}\label{prop:qudit-lcu-phi2}
The LCU oracles for $\phi_x^2$ are defined as
\begin{equation}
    \text{PREP}_{\phi_x^2}: |0\rangle \longmapsto \sum_{r=1}^{d-1}\sqrt{\frac{|\beta_r|}{\Lambda}}\;|r\rangle,
\end{equation}
with normalization constant \(\Lambda = \sum_{r=1}^{d-1} |\beta_r|\), and
\begin{equation}\label{eq:qudit_select}
    \text{SELECT}_{\phi_x^2} = |0\rangle\langle 0|\otimes I+\ \Big(\sum_{r=1}^{d-1}e^{i\theta_r} |r\rangle\!\langle r|\otimes Z_d^{r}\Big).
\end{equation}
\end{definition}

\begin{proof}
Note that the coefficients $\beta_r$ are in general complex. Following the standard
LCU convention (see, e.g., Refs.~\cite{Chakraborty_2024}),
we absorb the phase of each coefficient into the corresponding unitary and work
with non-negative real weights.
Therefore, we define
\begin{equation}
\tilde U_r = e^{i\theta_r}\,Z_d^{r},\qquad \theta_r=\arg\beta_r.
\end{equation}
Note that the operator can be decomposed as $\phi_x^{2} = \beta_0\mathbb{I} + \phi_x^{2\prime}$.
Thus it is sufficient to implement $e^{-it\phi_x^{2\prime}}$, since $(e^{-it\phi_x^{2\prime}})$ is
equal to $(e^{-it\phi_x^{2}})$ up to a global
phase~\cite{hardy2024optimizedquantumsimulationalgorithms}. With some abuse of notation, we drop the
identity term but continue to label the shifted operator by $(\phi_x^{2})$, i.e.,
\begin{equation}
\phi_x^{2} = \sum_{r=1}^{d-1} |\beta_r|\tilde U_r.
\end{equation}
Thus we take an index register over $r=0,\dots,d-1$, i.e., a single $d$-level qudit
(or $\lceil\log_2(d)\rceil$ qubits).
Using the standard LCU convention and defining
\(\Lambda = \sum_{r=1}^{d-1} |\beta_r|\),
we obtain
\begin{equation}
\text{PREP}_{\phi_x^2}: |0\rangle \longmapsto \sum_{r=1}^{d-1}\sqrt{\frac{|\beta_r|}{\Lambda}}\;|r\rangle
\end{equation}
and
\begin{align}
    \text{SELECT}_{\phi_x^2} &= |0\rangle\langle 0|\otimes I+\sum_{r=1}^{d-1} |r\rangle\!\langle r| \otimes \tilde U_r \nonumber \\
    &=  |0\rangle\langle 0|\otimes I+\ \Big(\sum_{r=1}^{d-1}e^{i\theta_r} |r\rangle\!\langle r|\otimes Z_d^{r}\Big).
\end{align} 
where the diagonal term acts on the index register.
\end{proof}

\subsubsection{The SELECT Oracle}\label{sec:regime2_qudit_SELECT}

With the index and system registers encoded as $d$-level qudits, the entangling part of SELECT \eqref{eq:qudit_select},
\begin{equation}
    \sum_{r=0}^{d-1} |r\rangle\langle r| \otimes Z_d^r,
\end{equation}
is the (Clifford) qudit generalized controlled-$Z$
\cite{Karacsony:2023gqc, hostens_dehaene_de_moor_qudit_stabilizer_2005}.
The only non-Clifford cost in SELECT therefore arises
from the diagonal phases that accompany the LCU coefficients, i.e.,
\begin{equation}
    D = |0\rangle\langle 0| + \sum_{r=1}^{d-1} e^{i\theta_r} |r\rangle\langle r|.
\end{equation}
For the SELECT cost analysis below, we adopt the idealized hybrid model in which code-switching between the $d$-level index qudit and an $n_b$-qubit index register is treated as free. The overhead of this conversion is discussed in Sec.~\ref{sec:regime2_codeswitch}.

\begin{corollary}\label{prop:D-diagonal-cost}
The diagonal phase operator 
\begin{equation}\label{eq:sel_diag}
    D = \sum_{r=0}^{d-1} e^{i\theta_r} |r\rangle\langle r|.
\end{equation}
can be implemented using $4n_b$ T gates, $n_b$ $R_z(\theta)$ rotations, and $n_b+1$ ancillas on an
$n_b = \lceil \log_2 d \rceil$ qubit index register.
\end{corollary}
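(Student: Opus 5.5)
The plan is to use the explicit form of the phases $\theta_r=\arg\beta_r$ from Lemma~\ref{prop:qudit-pauli-expansion} and split $D$ into a part that is linear in $r$ and a single step function. By Eq.~\eqref{eq:beta_r}, the factors $2\phi_{\max}^2/(d-1)^2$ and $1/\sin^2(\pi r/d)$ are positive. Hence
\begin{equation}
\theta_r=\frac{\pi r}{d}+\pi\,\chi\!\left[\cos(\pi r/d)<0\right]
=\frac{\pi r}{d}+\pi\,\chi\!\left[r>M\right]\pmod{2\pi},
\end{equation}
where $\chi$ is the indicator function, $d=2M+1$ is odd, and $r\in\{1,\dots,d-1\}$. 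The case $\cos(\pi r/d)=0$ cannot occur for odd $d$. The expression also gives $\theta_0=0$, which matches the identity entry $|0\rangle\langle 0|$ in $D$. It follows that
\begin{equation}
D=\Bigl(\sum_{r}e^{i\pi r/d}|r\rangle\langle r|\Bigr)\Bigl(\sum_r(-1)^{\chi[r>M]}|r\rangle\langle r|\Bigr),
\end{equation}
and I will implement each factor separately on the binary index register, with $r=\sum_{m=0}^{n_b-1}2^m r_m$. In the idealized code-switching model the states $r\ge d$ are unpopulated, so only their action on $r<d$ matters.

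For the linear factor I write $e^{i\pi r/d}=\prod_{m}e^{i\pi 2^m r_m/d}$. Each term is a single-qubit phase gate on bit $m$, namely $\mathrm{diag}(1,e^{i\pi 2^m/d})$, which equals $R_z(\pi 2^m/d)$ up to a global phase. This factor therefore costs exactly $n_b$ $R_z$ rotations and uses no T gates or ancillas. The global phases collected this way combine into one overall phase, which is harmless here because $D$ enters only as a factor of SELECT. If a controlled use ever needed that phase, I would absorb it into the PREP/SELECT phase convention.

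For the sign factor I need a phase oracle for the comparison of $r$ with the classical constant $M$. I will compute the bit $c=\chi[r>M]$ into a fresh ancilla with a constant-comparator, i.e.\ the carry chain of $r+(2^{n_b}-1-M)$, keeping only the final carry. With a classical addend each carry step needs one logical-AND, realized as Gidney's temporary AND at $4$ T gates. There are $n_b$ such steps, so the computation costs $4n_b$ T gates and uses $n_b$ carry ancillas. I will then apply $Z$ to $c$, which is Clifford, and uncompute the carry chain by measurement-based AND uncomputation, which needs only Cliffords and classically controlled $CZ$s and so adds no T gates. The ancilla count is $n_b$ for the carries plus $1$ for $c$, giving $n_b+1$ in total; if the last carry doubles as $c$, this is an upper bound.

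The main obstacle is the exact accounting in this comparator step. I need to check that a comparison against a constant really needs only $n_b$ AND gates, not $n_b-1$ or $2n_b$ from the standard adder constructions. I also need to check that the measurement-based uncomputation is valid inside a block-encoding, which requires that no garbage remain entangled with the index. The first thing I would try is an explicit check on the carry recursion $c_{m+1}=\mathrm{MAJ}(r_m,k_m,c_m)$ with constant bits $k_m$. When $k_m$ is fixed, this majority reduces to either an AND or an OR of $r_m$ and $c_m$, and an OR is an AND up to NOT gates. So each position costs exactly one Toffoli-equivalent, i.e.\ $4$ T gates via temporary AND, which confirms the stated budget of $4n_b$ T gates, $n_b$ $R_z$ rotations and $n_b+1$ ancillas.
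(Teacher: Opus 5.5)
Your proposal is correct and follows the paper's proof. It uses the same factorization $D=D_{\mathrm{sign}}D_{\mathrm{clock}}$ obtained from $\beta_r=c_r e^{i\pi r/d}$, the same $n_b$ single-qubit phase rotations from the binary expansion of $r$, and the same comparator phase-kickback at threshold $r\ge (d+1)/2$ with T-free measurement-based uncomputation. The one difference is that you derive the comparator cost from the explicit constant-addend carry chain rather than citing an optimized comparator from the literature; since $c_0=0$ makes the first carry step free, this shows $n_b-1$ ANDs already suffice, so the stated $4n_b$ T-count is comfortably an upper bound.
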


\begin{proof}
In Appendix~\ref{app:select-cost}, we show that the Fourier coefficients of $\phi_x^2$ admit the decomposition
$\beta_r = c_r e^{i\pi r / d}$ with $c_r \in \mathbb{R}$, so that 
\begin{equation}
    D = \sum_{r=0}^{d-1} \mathrm{sgn}(c_r)\,e^{i\pi r/d}\,|r\rangle\!\langle r|
      = D_{\mathrm{sign}} D_{\mathrm{clock}}.
\end{equation}
Here $D_{\mathrm{clock}} = \exp(i \tfrac{\pi}{d} N)$, where $N$ is the number operator on the index register, and
$D_{\mathrm{sign}}$ applies a $\pm 1$ phase depending on $r$.
Using a binary expansion of $r$ and a standard reversible comparator circuit, we show that
$D_{\mathrm{clock}}$ can be implemented with $n_b$ $R_z(\theta)$ rotations
(Lemma~\ref{prop:Dclock-implementation}),
and $D_{\mathrm{sign}}$ can be implemented with $4n_b$ T gates
and $n_b+1$ ancillas (Lemma~\ref{prop:Dsign-implementation}).
\end{proof}

\subsubsection{The PREP Oracle}\label{sec:regime2_qudit_PREP}

\begin{lemma}\label{prop:prep-phi2-qudit-cost}
The PREP oracle for the qudit LCU decomposition,
\begin{equation}\label{prep_def}
    \text{PREP}_{\phi_x^2}:\; |0\rangle \longmapsto
    \sum_{r=1}^{d-1} \sqrt{\frac{|\beta_r|}{\Lambda}}\; |r\rangle,
\end{equation}
requires at most $2^{n_b}-1$ single-qubit $R_z$ rotations, up to Cliffords, when the index is encoded
in an $n_b = \lceil \log_2 d \rceil$ qubit register.
\end{lemma}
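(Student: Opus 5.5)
The plan is to use the standard binary-tree (Grover--Rudolph / Möttönen-type) state-preparation construction on the $n_b$-qubit index register. First, embed the target amplitudes into the full register: set $a_r=\sqrt{|\beta_r|/\Lambda}$ for $r=1,\dots,d-1$ and $a_r=0$ for $r=0$ and for the unused strings $r=d,\dots,2^{n_b}-1$. The target state $\sum_r a_r\ket{r}$ then has real, non-negative amplitudes. This matters because, by Definition~\ref{prop:qudit-lcu-phi2}, all phases $\theta_r=\arg\beta_r$ have been moved into SELECT, so PREP needs only real amplitudes and no phase layer.

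Second, write the state-preparation unitary as a cascade of uniformly controlled $R_y$ rotations. At level $k=0,\dots,n_b-1$, qubit $k$ is rotated by an angle that depends on the $2^k$ values of the previously fixed more-significant bits. The angles are $\cos(\theta_{k,j}/2)=\sqrt{w_{k+1,2j}/w_{k,j}}$, where $w_{k,j}$ is the squared norm of the amplitudes in the $j$th block at level $k$. When $w_{k,j}=0$ (which happens for the zero-padded tail), the angle is set to $0$. Level $k$ is a single uniformly controlled rotation with $2^k$ angles.

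Third, use the known decomposition of a $k$-fold uniformly controlled rotation, via a Gray-code/Walsh--Hadamard transform of the angles, into $2^k$ single-qubit rotations interleaved with $2^k$ CNOTs. Each $R_y(\theta)=S\,H\,R_z(\theta)\,H\,S^\dagger$ (up to conventions), so each becomes one $R_z$ up to Cliffords. Summing over levels gives $\sum_{k=0}^{n_b-1}2^k=2^{n_b}-1$ single-qubit $R_z$ rotations, with everything else Clifford (CNOT, $H$, $S$). This yields the stated upper bound "at most $2^{n_b}-1$". The bound is not tight, because padded zeros and symmetries such as $|\beta_r|=|\beta_{d-r}|$ can make some transformed angles vanish.

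The main obstacle is correctness bookkeeping rather than counting. Two points need checking. The first is that the recursive angle formula is well defined when block norms vanish: any choice works there, since that branch carries zero amplitude. The second is that the Gray-code multiplexor identity reproduces the uniformly controlled rotation exactly, including the final CNOT, so that no extra non-Clifford gates arise. I would verify the multiplexor identity by induction on $k$, splitting off the most significant control as in Möttönen et al. Then, as a remark, note that each resulting $R_z$ is subsequently synthesized at the per-rotation precision used in the cost model.
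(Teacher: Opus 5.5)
Your proposal is correct and follows essentially the same route as the paper. The paper also uses the Möttönen-style cascade of uniformly controlled $R_y$ rotations, with the phase-equalization stage trivial because the amplitudes are real. A $k$-fold multiplexor contributes $2^k$ single-qubit rotations, and each $R_y$ is Clifford-conjugate to one $R_z$, giving $\sum_{k=0}^{n_b-1}2^k=2^{n_b}-1$. Your explicit handling of the zero-padded basis strings and of the CNOT count in the Gray-code multiplexor are harmless refinements of that same argument.
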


\begin{proof}

By Lemma~\ref{prop:qudit-pauli-expansion}, ${\phi}_x^2$ has full
support on the non-trivial generalized Pauli operators $\{Z_d^r\}_{r=1}^{d-1}$.
Thus the coefficients $|\beta_r|/\Lambda$ define a generic $(d-1)$-component
state on the subspace spanned by $\{|r\rangle\}_{r=1}^{d-1}$. Viewing the index
register as an $n_b = \lceil \log_2 d \rceil$-qubit system, we thus upper-bound 
$\text{PREP}_{\phi_x^2}$ as a generic real state-preparation task on that subspace.

A standard state-preparation construction uses
uniformly controlled rotations \cite{10.5555/2011670.2011675}. In the circuit decomposition for state preparation,
the ``phase equalization'' step is performed by a cascade of uniformly controlled $z$-rotations
$\Xi_z$, and the ``amplitude shaping'' step is performed by a cascade of uniformly controlled
$y$-rotations $\Xi_y$. This two-stage structure is given explicitly as a product over qubit
levels in Eq.~(7) of \cite{10.5555/2011670.2011675}.
Because the coefficients $\frac{|\beta_r|}{\Lambda}$ are real (hence no relative phases are required),
the phase-equalization stage is trivial: $\Xi_z = I$. Thus it suffices to implement
$\Xi_y^\dagger$, i.e., a cascade of uniformly controlled $R_y$ rotations.

Fix a target qubit $j\in\{1,\dots,n_b\}$. The corresponding gate in the cascade is a
$(j-1)$-fold uniformly controlled $y$-rotation $F^{\,j-1}_j(y,\cdot)$ (in the notation of
\cite{10.5555/2011670.2011675}). Note that a $k$-fold uniformly controlled
one-parameter rotation decomposes into exactly $2^k$ one-qubit elementary rotations (plus $O(k)$
CNOTs) \cite{bergholm_ucr}.
Therefore, the total number of one-qubit rotations appearing in the amplitude cascade is
\begin{equation}
\sum_{j=1}^{n_b} 2^{j-1} \;=\; 2^{n_b}-1.
\end{equation}
Finally, each single-qubit $R_y(\theta)$ can be implemented as one $R_z(\theta)$
conjugated by Clifford gates, so the non-Clifford cost is $2^{n_b}-1$ single-qubit $R_z$ rotations.
\end{proof}

\subsection{Asymptotic Scaling Comparison}\label{sec:regime2_asymptotic}
We first analyze the hybrid code-switching model in a common qubit $R_z$/$T$-count metric, which gives a clean asymptotic comparison; the fully qudit fixed-encoding analysis is deferred to Sec.~\ref{sec:regime2_nocs_crossover}.
This choice does not affect the asymptotic conclusion: the fixed-encoding block encoding still requires $\Theta(d)$ embedded two-level $SU(2)$ rotations per call (Appendix~\ref{app:reg2-crossover}), matching the $\Theta(d)$ rotation count in the code-switching construction.

Under the code-switching assumption, our non-Clifford metric is explicitly qubit $T$ gates. For a
single application of PREP and SELECT,
via Corollary~\ref{prop:D-diagonal-cost} and Lemma~\ref{prop:prep-phi2-qudit-cost}, the qudit implementation has
$
(2^{n_b}-1)+n_b
$
single-qubit $R_z$ gates, and
\begin{equation}
N^{(\mathrm{qd})}_{\mathrm{nc}}(d)=4n_b
\end{equation}
$T$ gates. By contrast, for the qubit implementation in Sec.~\ref{sec:regime2_qubit}, the full Regime~2 per-call block-encoding cost is
\begin{equation}
N^{\mathrm{LCU}}_{\mathrm{qubit}}(d,\varepsilon)
=
32\,b_r(\varepsilon)+24n_b-116,
\end{equation}
where
\begin{equation}
b_r(\varepsilon)=\left\lceil \frac12 \log_2\!\left(\frac{9\pi^2}{2\varepsilon}\right)\right\rceil.
\end{equation}
Thus, in the qubit Regime~2 baseline there is no synthesized-$R_z$ contribution.

A single block-encoding call corresponds to $W=(\mathrm{PREP}^{\dagger}\otimes I)\,\mathrm{SELECT}\,(\mathrm{PREP}\otimes I)$, so the total number of synthesized $R_z$ gates per qudit call is
\begin{equation}
L^{(2)}_{R_z,\mathrm{qd}}(d)=2(2^{n_b}-1)+n_b.
\end{equation}
Allocating the full target precision $\varepsilon$ uniformly across these synthesized qudit-side $R_z$ gates gives
\begin{equation}
\delta^{(\mathrm{qd})}(d,\varepsilon)=\frac{\varepsilon}{L^{(2)}_{R_z,\mathrm{qd}}(d)}.
\end{equation}
Using the single-qubit synthesis model
\begin{equation}\label{eq:qubit_synthesis}
C^{(\mathrm{qb})}_{R_z}(\delta)=0.57\log_2\!\left(\frac{1}{\delta}\right)+8.83,
\end{equation}
the total Regime~2 non-Clifford cost models become
\begin{equation}\label{eq:reg2_costs_qubit}
N^{\mathrm{LCU}}_{\mathrm{qubit}}(d,\varepsilon)
=
32\,b_r(\varepsilon)+24n_b-116,
\end{equation}
and
\begin{equation}\label{eq:reg2_costs_qudit}
\begin{split}
N^{\mathrm{LCU}}_{\mathrm{qudit}}(d,\varepsilon)
=
L^{(2)}_{R_z,\mathrm{qd}}(d)\\
\left[
0.57\log_2\!\left(\frac{1}{\delta^{(\mathrm{qd})}(d,\varepsilon)}\right)+8.83
\right]
+
N^{(\mathrm{qd})}_{\mathrm{nc}}(d).
\end{split}
\end{equation}

\begin{theorem}\label{prop:regime2_asymptotics}
For fixed $\varepsilon$, the cost models above imply that the qudit construction uses asymptotically more $T$ gates, i.e.
\begin{equation}
\frac{N^{\mathrm{LCU}}_{\mathrm{qubit}}(d,\varepsilon)}{N^{\mathrm{LCU}}_{\mathrm{qudit}}(d,\varepsilon)}
\longrightarrow 0
\qquad (d\to\infty).
\end{equation}
\end{theorem}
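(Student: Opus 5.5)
The approach is to bound the numerator from above and the denominator from below directly from the explicit cost models in Eqs.~\eqref{eq:reg2_costs_qubit} and~\eqref{eq:reg2_costs_qudit}, with $\varepsilon$ held fixed and $d\to\infty$ (equivalently $n_b=\lceil\log_2 d\rceil\to\infty$).

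\emph{Numerator.} With $\varepsilon$ fixed, $b_r(\varepsilon)$ is a constant. Hence
\begin{equation}
N^{\mathrm{LCU}}_{\mathrm{qubit}}(d,\varepsilon)=24n_b+O(1)=O(\log d).
\end{equation}

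\emph{Denominator.} Write $L=L^{(2)}_{R_z,\mathrm{qd}}(d)=2(2^{n_b}-1)+n_b$. Since $2^{n_b}\ge d$, we have $L\ge 2d-2+n_b=\Omega(d)$. In Eq.~\eqref{eq:reg2_costs_qudit} every term is nonnegative once $\delta^{(\mathrm{qd})}=\varepsilon/L\le 1$. This holds for all sufficiently large $d$, since $L\to\infty$ and $\varepsilon$ is fixed; in any case the $+8.83$ constant makes the bracket positive as soon as $\log_2(L/\varepsilon)>-8.83/0.57$. We may therefore drop the $\log$ contribution and the term $N^{(\mathrm{qd})}_{\mathrm{nc}}=4n_b\ge0$ to obtain
\begin{equation}
N^{\mathrm{LCU}}_{\mathrm{qudit}}(d,\varepsilon)\ge 8.83\,L\ge 8.83\,(2d-2).
\end{equation}
Keeping the logarithm would in fact give the sharper bound $\Theta(d\log(d/\varepsilon))$, but the crude bound already suffices.

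\emph{Conclusion.} Combining the two estimates,
\begin{equation}
0\le\frac{N^{\mathrm{LCU}}_{\mathrm{qubit}}}{N^{\mathrm{LCU}}_{\mathrm{qudit}}}\le\frac{24\lceil\log_2 d\rceil+32b_r(\varepsilon)}{8.83(2d-2)}\longrightarrow0.
\end{equation}
The numerator on the right is $O(\log d)$ and the denominator is $\Theta(d)$. The $-116$ has been discarded in the numerator bound, which is harmless for the upper bound. Nonnegativity of the left-hand side holds for large $d$ because $24n_b$ eventually dominates.

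The only real point of care is the sign of the bracketed synthesis cost at small $d$ and the lower bound $2^{n_b}\ge d$; both are elementary. The rest is bookkeeping.
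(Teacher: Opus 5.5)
Your proof is correct and follows the same approach as the paper: with $\varepsilon$ fixed, $b_r(\varepsilon)$ is constant, so the qubit cost is $\Theta(n_b)$, and $L^{(2)}_{R_z,\mathrm{qd}}(d)=\Theta(2^{n_b})$ forces the qudit cost to grow at least linearly in $d$. The only difference is that the paper keeps the logarithm to obtain the sharp $\Theta(2^{n_b}n_b)$ scaling, whereas you check the signs, drop the logarithm, and use the cruder $\Omega(d)$ lower bound, which already suffices for the limit.
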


\begin{proof}
We have $N^{\mathrm{LCU}}_{\mathrm{qubit}}(d,\varepsilon)=\Theta(n_b)$ for fixed $\varepsilon$, since $b_r(\varepsilon)$ is independent of $d$ and the qubit per-call cost is $32\,b_r(\varepsilon)+24n_b-116$. By contrast, $L^{(2)}_{R_z,\mathrm{qd}}(d)=\Theta(2^{n_b})$, and hence
\begin{equation}
N^{\mathrm{LCU}}_{\mathrm{qudit}}(d,\varepsilon)
=
\Theta\!\bigl(2^{n_b} n_b\bigr)
\end{equation}
for fixed $\varepsilon$, because
\(
\log_2(1/\delta^{(\mathrm{qd})})=\log_2(L^{(2)}_{R_z,\mathrm{qd}}/\varepsilon)=\Theta(n_b).
\)
Therefore
\(
N^{\mathrm{LCU}}_{\mathrm{qubit}}(d,\varepsilon) / N^{\mathrm{LCU}}_{\mathrm{qudit}}(d,\varepsilon)\to 0,
\)
as claimed.
\end{proof}
Note that in the $d\to\infty$ limit at fixed $\phi_{\max}$ and fixed $t$, $\varepsilon_{\mathrm{sim}}$, both LCU normalizations $\alpha_{\mathrm{qb}}(d)$ and $\alpha_{\mathrm{qd}}(d)$ remain $O(1)$ in $d$ (the qudit normalization converges to a finite constant; the qubit normalization is bounded by $4\phi_{\max}^2$ but does not approach a single limit because of the dependence on $n_b=\lceil\log_2 d\rceil$). Hence the query count $Q=O(\alpha t+\log(1/\varepsilon_{\mathrm{sim}}))$ is $O(1)$ in $d$ for both encodings. Consequently, incorporating the query complexity as detailed in Sec.~\ref{sec:regime2_queries} does not change the $d$-asymptotic separation established in Theorem~\ref{prop:regime2_asymptotics}: the asymptotic ordering is determined entirely by the per-call non-Clifford cost growth.
         
\subsection{\texorpdfstring{Finite-$d$ Crossover Analysis}{Finite-d Crossover Analysis}}\label{sec:regime2_finite_d}
\subsubsection{Required number of queries}\label{sec:regime2_queries}
Given an LCU decomposition
\begin{equation}
\hat H=\sum_{j=1}^{L} \alpha_j \hat U_j,
\qquad 
\alpha = \sum_{j=1}^{L} |\alpha_j|,
\end{equation}
qubitization-based simulation of $e^{-i H t}$ to error
$\varepsilon_{\mathrm{sim}}$ uses
\begin{equation}\label{eq:query_complexity}
Q(\alpha;t,\varepsilon_{\mathrm{sim}})
=
\mathcal{O}\!\bigl(\alpha\,t+\log(1/\varepsilon_{\mathrm{sim}})\bigr)
\end{equation}
queries to a block-encoding of $\hat H$~\cite{low_chuang_qubitization_2019}, where
$\alpha=\sum_j|\alpha_j|$ is the LCU coefficient $1$-norm. A single query consists of one
application of the block-encoding unitary
\begin{equation}
W=(\mathrm{PREP}^{\dagger}\!\otimes I)\,\mathrm{SELECT}\,
(\mathrm{PREP}\otimes I).
\end{equation}
Thus the end-to-end non-Clifford cost is obtained by multiplying the per-call
block-encoding cost by the corresponding query count.

We denote by $\alpha_{\mathrm{qb}}(d)$ and $\alpha_{\mathrm{qd}}(d)$ the
LCU normalizations entering the qubit and qudit query counts, respectively.
For the qubit projector construction of Sec.~\ref{sec:regime2_qubit}, Lemma~\ref{lem:qubit_lcu_phi2}
gives
\begin{equation}\label{eq:l1norm_qubit_phi2prime}
\alpha_{\mathrm{qb}}(d)
=
(\delta_{\phi})^2
\left(2^{n_b-1}-1\right)^2 .
\end{equation}
For the qudit generalized-Pauli expansion in Definition~\ref{prop:qudit-lcu-phi2},
we discard the identity contribution $\beta_0 I$, since it contributes only a
global phase to the time evolution. The corresponding shifted-operator
normalization is therefore
\begin{equation}\label{eq:l1norm_qudit_phi2prime}
\alpha_{\mathrm{qd}}(d)
\equiv
\Lambda
=
\sum_{r=1}^{d-1}|\beta_r|,
\end{equation}
with $\beta_r$ as in Eq.~\ref{eq:beta_r}.

Following Ref.~\cite{PhysRevX.8.041015}, we define the encoding-dependent query
counts
\begin{equation}\label{eq:Q_qb_qd}
\begin{aligned}
Q_{\mathrm{qb}}(d;t,\varepsilon_{\mathrm{sim}})
&=
Q\!\bigl(\alpha_{\mathrm{qb}}(d);t,\varepsilon_{\mathrm{sim}}\bigr),\\
Q_{\mathrm{qd}}(d;t,\varepsilon_{\mathrm{sim}})
&=
Q\!\bigl(\alpha_{\mathrm{qd}}(d);t,\varepsilon_{\mathrm{sim}}\bigr).
\end{aligned}
\end{equation}
The qubit end-to-end baseline is therefore
\begin{equation}\label{eq:Ttot_qb_def}
T_{\mathrm{tot}}^{\mathrm{qb}}(d)
=
Q_{\mathrm{qb}}(d;t,\varepsilon_{\mathrm{sim}})\,
N^{\mathrm{LCU}}_{\mathrm{qubit}}
\!\bigl(d,\varepsilon_{\mathrm{BE}}^{\mathrm{qb}}(d)\bigr),
\end{equation}
where $N^{\mathrm{LCU}}_{\mathrm{qubit}}$ is the non-Clifford cost of one qubit
block-encoding call. For any specified qudit implementation with per-call cost
$N^{\mathrm{LCU}}_{\mathrm{qudit}}$, the analogous total cost is
\begin{equation}\label{eq:Ttot_qd_def}
T_{\mathrm{tot}}^{\mathrm{qd}}(d)
=
Q_{\mathrm{qd}}(d;t,\varepsilon_{\mathrm{sim}})\,
N^{\mathrm{LCU}}_{\mathrm{qudit}}
\!\bigl(d,\varepsilon_{\mathrm{BE}}^{\mathrm{qd}}(d)\bigr).
\end{equation}
In the fixed-encoding comparison below, Eq.~\ref{eq:Ttot_qb_def} gives the qubit
baseline against which the qudit synthesis threshold is defined. In the
code-switching comparison, Eqs.~\ref{eq:Ttot_qb_def}--~\ref{eq:Ttot_qd_def}
are used directly to form the total-cost ratio.

The per-call block-encoding accuracy is coupled to the target simulation error
using a per-query error budget. By a standard telescoping bound in operator norm,
$Q$ queries each accurate to $\varepsilon_{\mathrm{BE}}$ contribute at most
$Q\,\varepsilon_{\mathrm{BE}}$ additional error. It is therefore sufficient to take
\begin{equation}\label{eq:epsBE_coupled_reg2}
\varepsilon_{\mathrm{BE}}^{\mathrm{qb}}(d)
=
\frac{\varepsilon_{\mathrm{sim}}}
{Q_{\mathrm{qb}}(d;t,\varepsilon_{\mathrm{sim}})},
\qquad
\varepsilon_{\mathrm{BE}}^{\mathrm{qd}}(d)
=
\frac{\varepsilon_{\mathrm{sim}}}
{Q_{\mathrm{qd}}(d;t,\varepsilon_{\mathrm{sim}})}.
\end{equation}
For the code-switching model, the total-cost ratio is then
\begin{equation}\label{eq:Rtot_def}
\mathcal{R}_{\mathrm{tot}}(d)
=
\frac{T_{\mathrm{tot}}^{\mathrm{qb}}(d)}
{T_{\mathrm{tot}}^{\mathrm{qd}}(d)},
\end{equation}
where $\mathcal{R}_{\mathrm{tot}}(d)>1$ indicates a lower total non-Clifford cost
for the qudit-based construction.

For the numerical crossover estimates below, we instantiate
Eq.~\ref{eq:query_complexity} using the continuous proxy
\begin{equation}\label{eq:Q_proxy}
Q(\alpha;t,\varepsilon_{\mathrm{sim}})
=
\alpha\,t+\log_2\!\bigl(1/\varepsilon_{\mathrm{sim}}\bigr),
\end{equation}
rather than an exact integer query count. Since both encodings use the same
qubitization primitive, algorithm-dependent $\mathcal{O}(1)$ prefactors affect
$Q_{\mathrm{qb}}$ and $Q_{\mathrm{qd}}$ in the same way. They therefore mainly
rescale absolute costs and, through the choice
$\varepsilon_{\mathrm{BE}}=\varepsilon_{\mathrm{sim}}/Q$, enter the per-query
synthesis budgets only logarithmically. The finite-$d$ crossover thresholds
reported below should consequently be interpreted up to small $\mathcal{O}(1)$
shifts near break-even.

The proxy in Eq.~\ref{eq:Q_proxy} separates two regimes. In the
precision-dominated regime, $\log_2(1/\varepsilon_{\mathrm{sim}})$ is the larger
contribution to the query count; in the time-dominated regime, $\alpha(d)t$ is
larger. For $\varepsilon_{\mathrm{sim}}=10^{-6}$, we have
$\log_2(1/\varepsilon_{\mathrm{sim}})\approx 20$. Over the odd dimensions
considered here, $3\le d\le 1000$, one finds
$\max_d \alpha_{\mathrm{qb}}(d)\cdot 0.1\approx 0.40$ and
$\max_d \alpha_{\mathrm{qd}}(d)\cdot 0.1\approx 0.07$, so both query counts are
precision dominated at $t=0.1$. By contrast,
$\min_d \alpha_{\mathrm{qb}}(d)\cdot 3000\approx 3\times 10^3$ and
$\min_d \alpha_{\mathrm{qd}}(d)\cdot 3000\approx 2\times 10^3$, so both query
counts are time dominated at $t=3000$.

Accordingly, we use $t=0.1$ and $t=3000$ as representative precision- and
time-dominated points, respectively. Throughout this finite-$d$ analysis we fix
$\phi_{\max}=1$ and vary $d$ by refining the uniform grid spacing
$\delta_{\phi}=2\phi_{\max}/(d-1)$. These values of $t$ are not intended as
application-specific runtimes; they are chosen to isolate the two limiting
behaviors of the query model.

\subsubsection{Fixed-Encoding Model}\label{sec:regime2_nocs_crossover}

We first study the fixed-encoding model and ask how favorable the qudit
synthesis prefactor must be for a fully qudit implementation to match or beat the qubit baseline in
total non-Clifford cost.
As in Sec.~\ref{sec:regime1_crossover}, we consider prime $d$ for the qudit construction, model the synthesis cost of an embedded two-level $SU(2)$ rotation as
\begin{equation}
C_{\mathrm{prim}}(\delta)\approx a\,\log_2(1/\delta),
\end{equation}
and derive a break-even prefactor $a_{\max}^{\mathrm{LCU}}(d,t,\varepsilon_{\mathrm{sim}})$ such
that $a<a_{\max}^{\mathrm{LCU}}(d,t,\varepsilon_{\mathrm{sim}})$ implies the qudit construction is
no more expensive end-to-end.

In the fixed-encoding construction, the non-Clifford parts of the qudit block-encoding call
are expressed in terms of embedded two-level $SU(2)$ rotations. In
Appendix~\ref{app:reg2-crossover} we show that the non-Clifford part of
$\mathrm{SELECT}$ can be synthesized, up to global phase, using at most $(d-1)$ embedded two-level
$SU(2)$ rotations (Corollary~\ref{prop:D-diagonal-cost-count}), while $\mathrm{PREP}$ and
$\mathrm{PREP}^\dagger$ each require $(d-1)$ such rotations
(Lemma~\ref{prop:prep-phi2-qudit-cost-nocs}).
Therefore, for the prime local dimensions considered in the fixed-encoding analysis, the total number
of synthesized two-level rotations per block-encoding call is
\begin{equation}
L^{(2)}_{\mathrm{qd}}(d)=3d-3.
\end{equation}
Using the per-query budget from Sec.~\ref{sec:regime2_queries} and allocating it uniformly across
these rotations, we take
\begin{equation}
\delta
=
\frac{\varepsilon_{\mathrm{BE}}^{\mathrm{qd}}(d)}{L^{(2)}_{\mathrm{qd}}(d)},
\end{equation}
so that each two-level rotation is synthesized to error at most $\delta$.
This gives the per-call qudit non-Clifford estimate
\begin{align}
N^{\mathrm{LCU}}_{\mathrm{qudit}}\!\bigl(d,\varepsilon_{\mathrm{BE}}^{\mathrm{qd}}(d)\bigr)
\approx
L^{(2)}_{\mathrm{qd}}(d)\,a\,
\log_2\!\left(\frac{L^{(2)}_{\mathrm{qd}}(d)}
{\varepsilon_{\mathrm{BE}}^{\mathrm{qd}}(d)}\right).
\label{eq:Nqudit-reg2-toy}
\end{align}

To compare against the qubit Regime~2 baseline of Eq.~\ref{eq:Ttot_qb_def}, we define the largest
synthesis prefactor $a$ for which the fixed-encoding qudit construction is no more expensive in
total cost as
\begin{equation}\label{eq:amax_def_total_reg2}
a_{\max}^{\mathrm{LCU}}(d,t,\varepsilon_{\mathrm{sim}})
=
\frac{T_{\mathrm{tot}}^{\mathrm{qb}}(d)}
{Q_{\mathrm{qd}}(d)\,L^{(2)}_{\mathrm{qd}}(d)\,
\log_2\!\Big(L^{(2)}_{\mathrm{qd}}(d)/\varepsilon_{\mathrm{BE}}^{\mathrm{qd}}(d)\Big)} ,
\end{equation}
where $T_{\mathrm{tot}}^{\mathrm{qb}}(d)$, $Q_{\mathrm{qd}}(d)$, and
$\varepsilon_{\mathrm{BE}}^{\mathrm{qd}}(d)$ are defined in
Sec.~\ref{sec:regime2_queries}. Thus, if
$a<a_{\max}^{\mathrm{LCU}}(d,t,\varepsilon_{\mathrm{sim}})$, the fixed-encoding
qudit implementation uses fewer non-Cliffords than the qubit encoding within this cost model.

As in the product-formula analysis, it is useful to compare this compiler target with the effective
logarithmic prefactor that would reproduce qubit $R_z$ synthesis at the same primitive precision.
For Regime~2 fixed-encoding, the relevant primitive precision is
$\delta=\varepsilon_{\mathrm{BE}}^{\mathrm{qd}}(d)/L^{(2)}_{\mathrm{qd}}(d)$, so we define
\begin{equation}\label{eq:aRz_LCU_def}
a_{R_z}^{\mathrm{LCU}}(d,t,\varepsilon_{\mathrm{sim}})
=
\frac{
0.57\log_2\!\Big(
L^{(2)}_{\mathrm{qd}}(d)/\varepsilon_{\mathrm{BE}}^{\mathrm{qd}}(d)
\Big)+8.83
}{
\log_2\!\Big(
L^{(2)}_{\mathrm{qd}}(d)/\varepsilon_{\mathrm{BE}}^{\mathrm{qd}}(d)
\Big)
}.
\end{equation}
The comparison
\begin{equation}
a_{\max}^{\mathrm{LCU}}(d,t,\varepsilon_{\mathrm{sim}})
>
a_{R_z}^{\mathrm{LCU}}(d,t,\varepsilon_{\mathrm{sim}})
\end{equation}
indicates that the fixed-encoding qudit construction can tolerate embedded-$SU(2)$ synthesis at least
as costly as qubit $R_z$ synthesis at the same primitive precision. Conversely, if
$a_{\max}^{\mathrm{LCU}}<a_{R_z}^{\mathrm{LCU}}$, then a genuine per-primitive qudit synthesis
advantage is required.

In the precision-dominated regime, with $\varepsilon_{\mathrm{sim}}=10^{-6}$ and $t=0.1$,
Table~\ref{tab:fixed_encoding_amax_short} reports representative values of
$a_{\max}^{\mathrm{LCU}}(d,0.1,10^{-6})$. The
condition
$a_{\max}^{\mathrm{LCU}}>a_{R_z}^{\mathrm{LCU}}$
holds only for $d=3$ and $d=5$. At $d=7$ the two quantities are already nearly equal, with
$a_{\max}^{\mathrm{LCU}}/a_{R_z}^{\mathrm{LCU}}\approx 0.97$, and for larger prime dimensions the
fixed-encoding qudit construction requires an increasingly favorable embedded-$SU(2)$ synthesis
prefactor.

\begin{table}[t]
\centering
\begin{tabular}{c c}
\hline\hline
\(d\)
&
\(a^{\mathrm{LCU}}_{\max}(d,0.1,10^{-6})\)
\\
\hline
3  & 2.56 \\
5  & 1.32 \\
7  & 0.85 \\
11 & 0.53 \\
13 & 0.44 \\
17 & 0.34 \\
19 & 0.30 \\
\hline\hline
\end{tabular}
\caption{
End-to-end break-even synthesis prefactor
\(a^{\mathrm{LCU}}_{\max}(d,t,\varepsilon_{\mathrm{sim}})\)
for the fixed-encoding Regime~2 qudit LCU under the coupled per-query error budget, evaluated at
\(\varepsilon_{\mathrm{sim}}=10^{-6}\) and \(t=0.1\). Only prime local dimensions \(d\) up to
\(19\) are included in the fixed-encoding analysis.}
\label{tab:fixed_encoding_amax_short}
\end{table}

In the time-dominated regime, with $\varepsilon_{\mathrm{sim}}=10^{-6}$ and $t=3000$, the relative
query normalizations make the fixed-encoding comparison more favorable, as shown in Fig.~\ref{fig:regime2_fixed_amax_t3000}.
With the same-precision reference prefactor
$a_{R_z}^{\mathrm{LCU}}$, the favorable prime dimensions are those for which
$a_{\max}^{\mathrm{LCU}}>a_{R_z}^{\mathrm{LCU}}$:
\begin{equation}
d\in\{3,5,7,11,13,17,19\}.
\end{equation}
The largest value in this regime occurs at $d=5$, where
$a_{\max}^{\mathrm{LCU}}(5,3000,10^{-6})=4.794611$ and
$a_{R_z}^{\mathrm{LCU}}(5,3000,10^{-6})=0.825901$.
The last favorable prime dimension is $d=19$, where
$a_{\max}^{\mathrm{LCU}}(19,3000,10^{-6})=1.339724$ and
$a_{R_z}^{\mathrm{LCU}}(19,3000,10^{-6})=0.810783$.
For $d\ge 23$ among the prime dimensions considered, the break-even prefactor lies below the
same-precision qubit-\(R_z\) reference prefactor.

\begin{figure}[t]
  \centering
  \includegraphics[width=\linewidth]{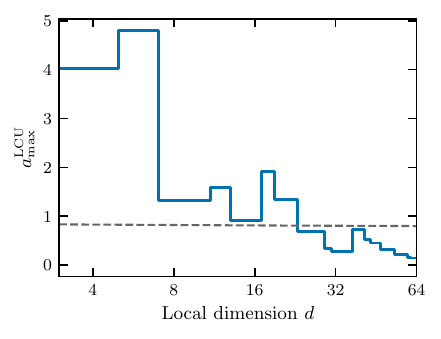}
  \caption{Break-even synthesis prefactor
  \(a_{\max}^{\mathrm{LCU}}(d,3000,10^{-6})\)
  for the fixed-encoding Regime~2 implementation under the coupled per-query error budget. The
  dashed curve marks the same-precision qubit-\(R_z\) effective prefactor
  \(a_{R_z}^{\mathrm{LCU}}(d,3000,10^{-6})\) from Eq.~\ref{eq:aRz_LCU_def}. Values of
  \(a_{\max}^{\mathrm{LCU}}\) above this reference prefactor indicate that the fully qudit implementation can
  match or outperform the qubit baseline in total non-Clifford cost without requiring an embedded
  two-level qudit synthesis prefactor smaller than the corresponding qubit-\(R_z\) effective
  prefactor. Only prime local dimensions \(d\) are included in the fixed-encoding analysis.}
  \label{fig:regime2_fixed_amax_t3000}
\end{figure}

\subsubsection{Idealized Code-Switching Model}\label{sec:regime2_codeswitch}
We turn to the code-switching model.
Using the end-to-end proxies $T_{\mathrm{tot}}^{\mathrm{qb/qd}}(d)$ and the ratio
$\mathcal{R}_{\mathrm{tot}}(d)$ defined in Sec.~\ref{sec:regime2_queries}, we consider the
idealized hybrid setting in which the $n_b$-qubit index register can be code-switched to a single
$d$-level qudit (and back) to implement the controlled-$Z_d$ interaction.

\begin{figure*}[t]
\centering
\includegraphics[width=\textwidth]{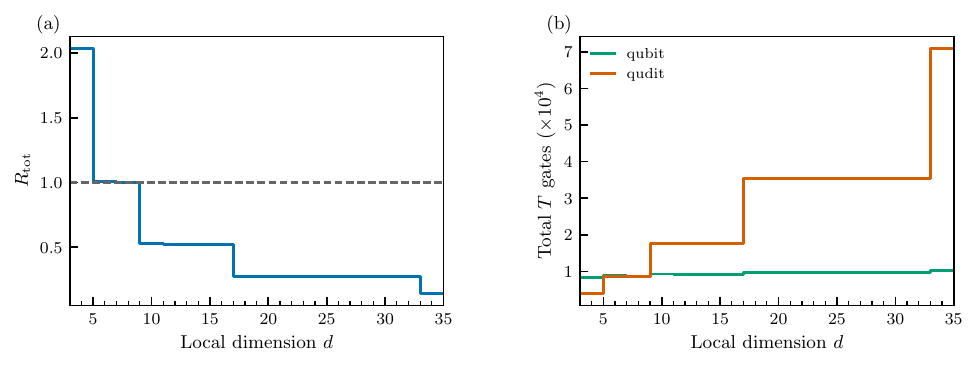}
\caption{Non-Clifford resource comparison for the Regime~2 LCU/block-encoding implementation under the code-switching model, for $\varepsilon_{\mathrm{sim}}=10^{-6}$ and $t=0.1$. Panel (a) shows the ratio
$\mathcal{R}_{\mathrm{tot}}(d)=T_{\mathrm{tot}}^{\mathrm{qb}}(d)/T_{\mathrm{tot}}^{\mathrm{qd}}(d)$
as a function of the local dimension $d$. Panel (b) shows the corresponding total $T$-count estimates
$T_{\mathrm{tot}}^{\mathrm{qb}}(d)$ and $T_{\mathrm{tot}}^{\mathrm{qd}}(d)$.}
\label{fig:regime2_t0p1}
\end{figure*}

Figure~\ref{fig:regime2_t0p1} shows the resulting total $T$-counts, $T_{\mathrm{tot}}^{\mathrm{qb}}(d)$ and $T_{\mathrm{tot}}^{\mathrm{qd}}(d)$, together with the corresponding ratio $\mathcal{R}_{\mathrm{tot}}(d)$ at $\varepsilon_{\mathrm{sim}}=10^{-6}$ and $t=0.1$. For these parameters, the qudit construction is cheaper only for $d=3$ and $d=5$, while the qubit construction is cheaper for $d\ge 7$. The crossover occurs between $d=5$ and $d=7$: $\mathcal{R}_{\mathrm{tot}}(5)=1.006205$ and $\mathcal{R}_{\mathrm{tot}}(7)=0.999963$. The piecewise-constant structure in $d$ is a consequence of the dependence on $n_b=\lceil \log_2 d\rceil$. The largest relative qudit advantage occurs at $d=3$, where $\mathcal{R}_{\mathrm{tot}}(3)=2.033787$, and the largest absolute savings also occur at $d=3$, where the qudit construction saves $T_{\mathrm{tot}}^{\mathrm{qb}}(3)-T_{\mathrm{tot}}^{\mathrm{qd}}(3)\approx 4.20\times 10^3$ $T$ gates.

\begin{figure*}[t]
\centering
\includegraphics[width=\textwidth]{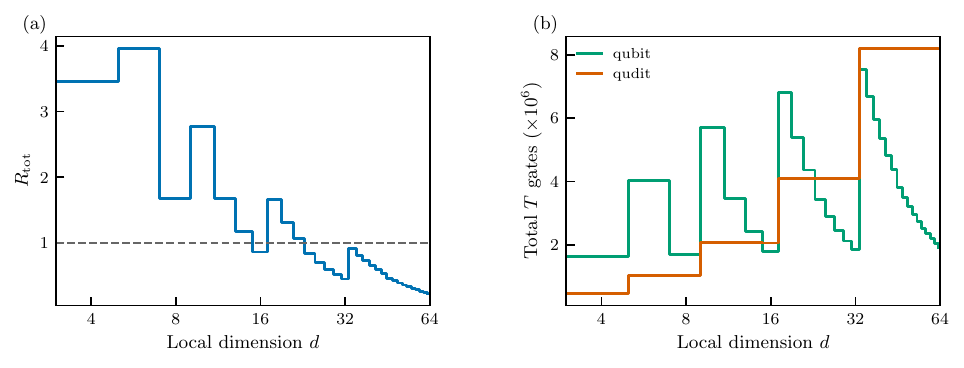}
\caption{Non-Clifford resource comparison for the Regime~2 LCU/block-encoding implementation under the code-switching model, for $\varepsilon_{\mathrm{sim}}=10^{-6}$ and $t=3000$. Panel (a) shows the ratio
$\mathcal{R}_{\mathrm{tot}}(d)=T_{\mathrm{tot}}^{\mathrm{qb}}(d)/T_{\mathrm{tot}}^{\mathrm{qd}}(d)$
as a function of the local dimension $d$. Panel (b) shows the corresponding total $T$-count estimates
$T_{\mathrm{tot}}^{\mathrm{qb}}(d)$ and $T_{\mathrm{tot}}^{\mathrm{qd}}(d)$.
The piecewise-constant structure and repeated crossover windows arise from the binary embedding.}
\label{fig:regime2_t3000}
\end{figure*}

Figure~\ref{fig:regime2_t3000} shows the resulting total $T$-counts, $T_{\mathrm{tot}}^{\mathrm{qb}}(d)$ and $T_{\mathrm{tot}}^{\mathrm{qd}}(d)$, together with the corresponding ratio $\mathcal{R}_{\mathrm{tot}}(d)$ at $\varepsilon_{\mathrm{sim}}=10^{-6}$ and $t=3000$ (time-dominated regime). For these parameters, the qudit construction is cheaper for
$d\in [3,13]\cup[17,21]$,
while the qubit construction is cheaper outside these windows. The final crossover occurs between $d=21$ and $d=23$: $\mathcal{R}_{\mathrm{tot}}(21)=1.062653$, whereas $\mathcal{R}_{\mathrm{tot}}(23)=0.835319$. As in the precision-dominated comparison, the piecewise structure in $d$ originates from the dependence on $n_b=\lceil \log_2 d\rceil$. The largest relative qudit advantage occurs at $d=5$, where $\mathcal{R}_{\mathrm{tot}}(5)=3.959978$, while the largest absolute savings occur at $d=9$, where the qudit construction saves $T_{\mathrm{tot}}^{\mathrm{qb}}(9)-T_{\mathrm{tot}}^{\mathrm{qd}}(9)\approx 3.65\times 10^6$ $T$ gates.

We can also interpret the qudit advantage as a code-switching budget.
Let
\begin{equation}\label{eq:codeswitch_budget}
\Delta_{\mathrm{tot}}(d)
=
T_{\mathrm{tot}}^{\mathrm{qb}}(d)-T_{\mathrm{tot}}^{\mathrm{qd}}(d),
\end{equation}
and suppose the implementation requires $k$ directional code-switches per query.
In the hybrid construction considered here, the $n_b$-qubit index register is used for PREP (and the diagonal
$R_z$-synthesis term in SELECT); it is then converted to a single $d$-level qudit to implement the controlled-$Z_d$
interaction, and finally converted back to qubits, so $k=2$.
If each switch incurs non-Clifford overhead $T_{\mathrm{cs}}$, the added overhead is
$Q_{\mathrm{qd}}(d)\,k\,T_{\mathrm{cs}}$, where $Q_{\mathrm{qd}}(d)$ is the qudit query count from
Sec.~\ref{sec:regime2_queries}.
Thus, qudits remain advantageous provided
\begin{equation}\label{eq:codeswitch_ineq}
T_{\mathrm{cs}}
<
\frac{\Delta_{\mathrm{tot}}(d)}{Q_{\mathrm{qd}}(d)\,k}.
\end{equation}

For $\varepsilon_{\mathrm{sim}}=10^{-6}$ and $t=0.1$ with $k=2$, this yields representative budgets $T_{\mathrm{cs}}\approx 1.05\times 10^{2}$ (for $d=3$) and $T_{\mathrm{cs}}\approx 1.35\times 10^{0}$ (for $d=5$); the budget is already negative by $d=7$, reflecting the fact that the total-cost crossover has already occurred.

For $\varepsilon_{\mathrm{sim}}=10^{-6}$ and $t=3000$, this yields representative budgets $T_{\mathrm{cs}}\approx 2.87\times 10^{2}$ (for $d=3$), $T_{\mathrm{cs}}\approx 7.42\times 10^{2}$ (for $d=5$), $T_{\mathrm{cs}}\approx 8.97\times 10^{2}$ (for $d=9$), and $T_{\mathrm{cs}}\approx 6.65\times 10^{2}$ (for $d=17$), while near the final crossover the budget drops to $T_{\mathrm{cs}}\approx 6.34\times 10^{1}$ at $d=21$ and becomes negative by $d=23$.
We discuss the implications of such a comparison in Sec.~\ref{sec:discussion}.

\section{Discussion}
\label{sec:discussion}
We compare fault-tolerant qudit and qubit encodings for implementing the onsite diagonal
operator $e^{-it\phi_x^2}$ arising from a $d$-level field-amplitude truncation at matched
accuracy. The central question is not whether qudits win asymptotically for this operator
class, but whether finite-$d$ windows exist in which they can reduce non-Clifford cost.
Our results show that such windows do exist, but their size and relevance depend strongly on
the simulation regime, the achievable synthesis cost of embedded two-level qudit primitives,
and, in hybrid constructions, the overhead of code-switching.

Table~\ref{tab:asymptotic_summary} summarizes the leading-order non-Clifford scalings
 established below for both regimes and encodings, within the constructive cost models analyzed here.

\begin{table*}[t]
\centering
\setlength{\tabcolsep}{6pt}
\renewcommand{\arraystretch}{1.25}
{
\begin{tabular}{@{}lcc@{}}
\hline\hline
Cost & Qubit & Qudit \\
\hline
PF, rotations per step                              & $\Theta(n_b^{2})$                                          & $\Theta(d)$ \\
PF, non-Clifford per step, total step error $\nu$   & $\Theta\!\big(n_b^{2}\log\tfrac{n_b^{2}}{\nu}\big)$        & $\Theta\!\big(d\log\tfrac{d}{\nu}\big)$ \\
\hline
LCU, rotations per block-encoding call              & $0$                                                       & $\Theta(d)$ \\
LCU, non-Clifford per call, total call error $\varepsilon$ & $\Theta\!\big(n_b+\log\tfrac{1}{\varepsilon}\big)$   & $\Theta\!\big(d\log\tfrac{d}{\varepsilon}\big)$ \\
\hline\hline
\end{tabular}}
\caption{Leading-order asymptotic  scalings for a single onsite quadratic diagonal evolution $e^{-it\phi_x^2}$,
 comparing the $n_b=\lceil\log_2 d\rceil$-qubit encoding with the native $d$-level qudit encoding, 
 in the product-formula (PF, per Trotter step) and LCU (LCU, per block-encoding call) regimes. 
 ``Rotations'' counts synthesized one-parameter rotations ($R_z$ for qubits, embedded two-level $SU(2)$ for qudits);
 the qubit LCU baseline uses Toffoli/$T$ gates rather than synthesized rotations.
 Here $\nu$ and $\varepsilon$ are errors for one Trotter step and for one
block-encoding call, respectively.}
\label{tab:asymptotic_summary}
\end{table*}

In Regime~1, the binary qubit encoding has an asymptotic advantage for structural reasons:
the binary expansion of the onsite quadratic phase yields only one- and two-body $Z$ and
$ZZ$ terms, and hence only $\Theta((\log d)^2)$ synthesized $R_z$ rotations per Trotter
step. By contrast, the single-qudit implementation analyzed in Sec.~\ref{sec:regime1_qudit}
uses exactly $d-1$ nontrivial embedded two-level $R_Z^{(k,k+1)}(\theta_k)$ rotations for the
symmetric truncation. Under the common logarithmic-in-precision synthesis model, this means
that for the qudit route to asymptotically beat the qubit route, the per-rotation synthesis
prefactor for the embedded $SU(2)$ rotations would need to be smaller than that of a qubit
$R_z$ by a factor $\Theta((\log d)^2/d)$, i.e., an exponentially strong improvement in
$n_b=\lceil\log_2 d\rceil$. Since known fault-tolerant synthesis procedures already achieve
$O(\log(1/\delta))$ scaling for common discrete gate
sets~\cite{ross_selinger_optimal_rz_2014,Bocharov:2016yvg,gustafson2025synthesissinglequtritcircuits,10.5555/2685188.2685198},
we do not expect such an exponential per-rotation advantage to arise. Likewise, phase
kickback does not evade the basic issue: after coherently computing $n^2$, one must still
synthesize a diagonal phase function of that value.

The finite-$d$ Regime~1 crossover analysis points in the same direction. Modeling each
embedded qudit $SU(2)$ primitive as costing $\sim a\log_2(1/\delta)$ non-Cliffords and
comparing against the qubit $R_z$ synthesis model $0.57\log_2(1/\delta)+8.83$
yields the break-even threshold $a_{\max}^{\mathrm{PF}}(d,\varepsilon)$ and the qubit-$R_z$
effective prefactor $a_{R_z}^{\mathrm{PF}}(d,\varepsilon)$ (Sec.~\ref{sec:regime1_crossover}).
For representative $\varepsilon=10^{-6}$, we find
$a_{\max}^{\mathrm{PF}}(3,10^{-6})\approx 1.51$ and
$a_{\max}^{\mathrm{PF}}(5,10^{-6})\approx 1.48$,
while $a_{\max}^{\mathrm{PF}}(7,10^{-6})\approx 0.96$ already falls below
$a_{R_z}^{\mathrm{PF}}$. The condition $a_{\max}^{\mathrm{PF}}>a_{R_z}^{\mathrm{PF}}$
holds only for $d=3$ and $d=5$ among prime dimensions at this precision.
We therefore view Regime~1 as allowing, at most, a narrow low-$d$ opportunity for qudits,
rather than a robust route to scalable savings.

In Regime~2, the comparison is between the native qudit LCU construction and the standard signed-binary qubit projector-LCU baseline of Refs.~\cite{Su_2021_PRXQ,Spagnoli_2025}, both targeting the same physical $d=2M+1$ field-amplitude truncation. The qudit construction makes SELECT particularly cheap: when both index and
system are encoded as qudits, the entangling part $\sum_r |r\rangle\langle r|\otimes Z_d^r$
is a generalized controlled-$Z$ and hence Clifford, leaving only a simple diagonal on the
index register as the non-Clifford component. However, the qubit baseline still has
an asymptotic advantage overall, because in the generalized-Pauli expansion on a $d$-level
system, $\phi_x^2$ has irreducible LCU length $d-1$, so the corresponding PREP oracle
amounts to generic $(d-1)$-dimensional state preparation requiring $d-1$ embedded two-level
rotations (Lemma~\ref{prop:prep-phi2-qudit-cost-nocs}). By contrast, the qubit baseline's
per-call cost grows only as $\Theta(n_b)$ (Lemma~\ref{lem:qubit_lcu_phi2}). Thus, while
SELECT is especially favorable in the qudit encoding, the full oracle cost is not
asymptotically improved (Theorem~\ref{prop:regime2_asymptotics}). As elsewhere in Regime~2, this is a construction-specific statement about the standard signed-binary qubit projector-LCU baseline; alternative qubit LCU constructions could change the asymptotics.

We therefore analyze finite-$d$ behavior in an end-to-end sense by weighting per-block-encoding
costs by the qubitization query complexity (Sec.~\ref{sec:regime2_nocs_crossover}). In the
fixed-encoding setting, the precision-dominated regime ($\varepsilon_{\mathrm{sim}}=10^{-6}$,
$t=0.1$) is relatively restrictive: as shown in Table~\ref{tab:fixed_encoding_amax_short}, the
condition $a_{\max}^{\mathrm{LCU}}>a_{R_z}^{\mathrm{LCU}}$ holds only for $d=3$
($a_{\max}^{\mathrm{LCU}}=2.56$) and $d=5$ ($a_{\max}^{\mathrm{LCU}}=1.32$), while
$a_{\max}^{\mathrm{LCU}}$ falls below the qubit-$R_z$ reference prefactor for $d\ge 7$.

In the time-dominated regime ($\varepsilon_{\mathrm{sim}}=10^{-6}$, $t=3000$), the picture is
more favorable to qudits: $a_{\max}^{\mathrm{LCU}}(d,3000,10^{-6})$ exhibits multiple
threshold crossings as a function of $d$, reflecting the piecewise dependence on the binary
embedding through $n_b=\lceil\log_2 d\rceil$ (Fig.~\ref{fig:regime2_fixed_amax_t3000}).
The condition $a_{\max}^{\mathrm{LCU}}>a_{R_z}^{\mathrm{LCU}}$ holds for all prime dimensions
$d\in\{3,5,7,11,13,17,19\}$, with the largest value $a_{\max}^{\mathrm{LCU}}(5,3000,10^{-6})=4.794611$
and the last favorable prime $d=19$, where $a_{\max}^{\mathrm{LCU}}(19,3000,10^{-6})=1.339724$
compared with $a_{R_z}^{\mathrm{LCU}}(19,3000,10^{-6})=0.810783$.
For $d\ge 23$ among prime dimensions, the break-even prefactor lies below the qubit-$R_z$
reference. These values are not uniform in $d$, so any practical advantage remains
truncation-dependent.

For context, existing qutrit synthesis results report effective logarithmic prefactors
around $\sim 3.16$--$3.24$ (in $R$-count) for relevant diagonal/two-level
primitives~\cite{gustafson2025synthesissinglequtritcircuits,Bocharov:2016yvg}. Because the
fixed-encoding break-even thresholds $a_{\max}^{\mathrm{LCU}}$ are defined relative to the
qubit $T$-gate synthesis model, a direct numerical comparison with $R$-count data requires
a gate-set conversion not available in general; nevertheless, the favorable $d=3$ window in the
time-dominated regime has a break-even prefactor substantially above the qubit reference
$a_{R_z}^{\mathrm{LCU}}\approx 0.83$, suggesting that concrete qutrit synthesis data could
fall within the achievable range. Further, for odd prime dimensions,
Ref.~\cite{PrakashKalraJain2021} gives an asymptotic worst-case lower bound for approximating
arbitrary $SU(p)$ unitaries by single-qudit Clifford+$T$ circuits. This bound sets a floor
on achievable synthesis cost and provides context for whether the break-even thresholds
identified here are plausibly reachable, but a definitive comparison requires dimension-specific
synthesis results.

We then consider the code-switching setting, which we emphasize is entirely idealized:
determining the fault-tolerant overhead of converting between a logical qudit and a logical
qubit register remains an open problem. In the precision-dominated regime
($t=0.1$, $\varepsilon_{\mathrm{sim}}=10^{-6}$, Fig.~\ref{fig:regime2_t0p1}), the qudit
construction is cheaper only for $d=3$ and $d=5$, with the crossover occurring between
$d=5$ and $d=7$: $\mathcal{R}_{\mathrm{tot}}(5)=1.006205$ while
$\mathcal{R}_{\mathrm{tot}}(7)=0.999963$. The largest relative advantage occurs at $d=3$,
where $\mathcal{R}_{\mathrm{tot}}(3)=2.033787$, and the largest absolute savings also occur
at $d=3$, where the qudit construction saves approximately $4.20\times10^3$ $T$ gates.

In the time-dominated regime ($t=3000$, $\varepsilon_{\mathrm{sim}}=10^{-6}$,
Fig.~\ref{fig:regime2_t3000}), the end-to-end comparison exhibits multiple crossings tied to
the binary embedding, with the qudit construction cheaper for
$d\in[3,13]\cup[17,21]$ and the qubit encoding cheaper outside these windows.
The final crossover occurs between $d=21$ and $d=23$: $\mathcal{R}_{\mathrm{tot}}(21)=1.062653$
while $\mathcal{R}_{\mathrm{tot}}(23)=0.835319$.
The largest relative advantage occurs at $d=5$, where $\mathcal{R}_{\mathrm{tot}}(5)=3.959978$,
while the largest absolute savings occur at $d=9$, where the qudit construction saves
approximately $3.65\times10^6$ $T$ gates.

Equivalently, these savings define a loose break-even budget for code-switching. Writing
$\Delta_{\mathrm{tot}}(d)=T_{\mathrm{tot}}^{\mathrm{qb}}(d)-T_{\mathrm{tot}}^{\mathrm{qd}}(d)$
and assuming $k=2$ directional code switches per query, qudit implementations remain
advantageous provided the additional non-Clifford overhead per switch satisfies
Eq.~\ref{eq:codeswitch_ineq}. In the precision-dominated regime, representative budgets are
$T_{\mathrm{cs}}\approx 1.05\times10^2$ T gates per switch at $d=3$ and
$T_{\mathrm{cs}}\approx 1.35$ at $d=5$; the budget is already negative at $d=7$.
In the time-dominated regime, budgets reach up to $\mathcal{O}(10^3)$ T gates per switch
near the most favorable dimensions (e.g., $T_{\mathrm{cs}}\approx 8.97\times10^2$ at $d=9$),
declining toward the final crossover ($T_{\mathrm{cs}}\approx 6.34\times10^1$ at $d=21$,
negative for $d\ge23$).

However, comparing $T_{\mathrm{cs}}$ with the actual fault-tolerant cost of code-switching in
this setting is highly nontrivial. The overhead of converting between a fault-tolerant
$d$-level qudit encoding and a fault-tolerant qubit register is strongly architecture- and
code-dependent, and a concrete implementation with explicit space--time accounting remains an
open problem. Relatedly, Ref.~\cite{Moussa:2016jar} introduced qubit fusion
and qudit fission, effecting a conversion between two qubits and a single $d=4$ qudit via
stabilizer circuits and the consumption of a non-stabilizer resource state $|F\rangle$. In
that framework, a natural route to a $T$-equivalent estimate is to compare the cost of
supplying high-fidelity $|F\rangle$ states with that of supplying high-fidelity qubit magic
states. However, the absolute and relative costs of $|F\rangle$ versus $|T\rangle$ depend
sensitively on the available distillation protocols and the underlying fault-tolerant
architecture. Consequently, a full end-to-end resource estimate for code
switching---including state injection, distillation, and the associated space--time overhead
of the conversion circuitry---is beyond the scope of this work.

The small-$d$ window of potential advantage identified in Regime~1 and in the
precision-dominated Regime~2 comparison is the most accessible regime for near-term
qudit synthesis efforts. The time-dominated Regime~2 model extends this window further:
in the fixed-encoding comparison, the break-even condition $a_{\max}^{\mathrm{LCU}}>a_{R_z}^{\mathrm{LCU}}$
holds for prime dimensions up to $d=19$; in the code-switching comparison, the favorable
range extends to $d\le 21$.
Ref.~\cite{klco_savage_scalar_digitization_2018} argues that roughly
$4$--$7$ qubits per site (corresponding to $d=2^4$--$2^7$) are
required before digitization error in $\phi_x$ is reduced below typical Trotter error.
This scale is at or above the end of the favorable windows identified here: only the lower
end of this range ($d\approx 16$--$19$) overlaps with the time-dominated break-even region,
while larger truncations ($d\ge 23$) lie outside the favorable window in both models.
Furthermore, diagonal evolutions generated by quadratic onsite/link
operators of the same form arise naturally in other settings, where the physically relevant
Hilbert-space cutoffs are often explicitly small integers. For example, for gauge links in
$(1+1)$D scalar QED, a truncation $d\leq 13$ appears sufficient~\cite{PhysRevB.103.245137,Gustafson_2021}.
For pure $U(1)$ gauge theory in $(2+1)$D, Ref.~\cite{PhysRevD.107.L031503} finds
per-mille plaquette accuracy with only $d=7$ states per lattice site; and for 2D lattice
QED plaquette/gauge-link simulations, Refs.~\cite{Paulson:2020zjd,Meth:2023wzd} explicitly
work with minimal-to-improved gauge-field discretizations at $d=3$ and $d=5$, with
convergence trends indicating that these modest $d$ already capture the relevant low-energy
physics. These examples further motivate investigations of fault-tolerant qudits for
$d>3$.

Finally, we comment on how far these conclusions may extend beyond the specific operator $\phi_x^2$. For a fixed-degree polynomial diagonal term $h(N)$ in a truncated integer-valued operator $N$, the Regime~1 asymptotic lesson is expected to persist under the same constructive route: the binary qubit expansion yields a number of $Z$-type rotations that grows polynomially in $n_b=\lceil\log_2 d\rceil$, whereas the direct embedded-rotation qudit diagonal implementation of Lemma~\ref{prop:su2_irred} generally uses $O(d)$ two-level rotations. The quadratic case is therefore not uniquely responsible for the asymptotic ordering. The finite-$d$ crossover thresholds, by contrast, are operator-dependent, so the specific break-even values reported here would need to be recomputed for higher-order potentials. For generic, unstructured diagonal operators we claim no universal conclusion: the comparison can become sensitive to the choice of basis, oracle construction, and synthesis model. Conversely, some lattice-gauge-theory settings, for example $\mathbb{Z}_N$ lattice gauge theories,
 carry qudit structure that may structurally favour qudit-based constructions: recent work exploiting prime-$N$ $\mathbb{Z}_N$ lattice-gauge-theory structure within a qudit stabiliser-code framework is one example~\cite{spagnoli2026quditstabilisercodes}. We therefore regard the present quadratic analysis as a controlled benchmark for qudit-versus-qubit non-Clifford costs, rather than a universal statement about all diagonal Hamiltonians.

\section{Conclusions}
\label{sec:conclusions}

Overall, these results suggest that fault-tolerant qudits should be viewed not as providing a generic asymptotic advantage for this class of quadratic diagonal evolutions, but rather as a potentially useful finite-$d$ resource when the truncation, simulation regime, and achievable synthesis overhead are favorable. Within the constructions studied here, the LCU/block-encoding (Regime~2) comparison is the more promising setting: the fixed-encoding analysis identifies break-even conditions at low prime dimensions (favorable up to $d=19$ in the time-dominated regime), while the idealized code-switching analysis shows that meaningful constant-factor savings can arise at selected truncations up to $d\lesssim 21$, subject to architecture-dependent conversion costs. The Regime~2 comparison is, in particular, a construction-specific comparison between the native qudit LCU construction and the standard signed-binary qubit projector-LCU baseline of Refs.~\cite{Su_2021_PRXQ,Spagnoli_2025}; alternative qubit LCU constructions could change the comparison in either direction.

We therefore interpret the break-even thresholds derived here as concrete compiler targets rather than as a blanket demonstration of qudit superiority. A more definitive comparison will require explicit gate-set-dependent synthesis costs for diagonal $SU(d)$ operations, or equivalently for the embedded two-level qudit $SU(2)$ primitives used throughout, together with architecture-level resource estimates for code conversion, state injection and distillation, and associated space-time overheads. It would also be valuable to extend this analysis to other operator classes for which the structural simplifications available to binary qubit encodings for diagonal operators do not apply.

\begin{acknowledgments}
We thank Joseph Carlson, Minoo Kabirnezhad, Andy C. Y. Li, and Constantinos Andreopoulos for valuable conversations, as well as Cameron Cook for helpful comments on the draft. S.G. acknowledges an STFC PhD studentship
at the LIV.INNO Centre for Doctoral Training ``Innovation in Data
Intensive Science''.
G.P. and D.K. were partially supported for this work by the DOE/HEP QuantISED program grant ``HEP Machine Learning and Optimization Go Quantum,'' identification number 0000240323. M.M. acknowledges the support of Agencia Nacional de  Investigación e Innovación, project POS\_EXT\_2023\_1\_175687, and LIV.INNO.

This work was produced by Fermi Forward Discovery Group, LLC under Contract No. 89243024CSC000002 with the U.S. Department of Energy, Office of Science, Office of High Energy Physics. The United States Government retains and the publisher, by accepting the work for publication, acknowledges that the United States Government retains a non-exclusive, paid-up, irrevocable, world-wide license to publish or reproduce the published form of this work, or allow others to do so, for United States Government purposes. The Department of Energy will provide public access to these results of federally sponsored research in accordance with the DOE Public Access Plan (\url{http://energy.gov/downloads/doe-public-access-plan}).

Fermilab report number: FERMILAB-PUB-26-0287-ETD.
\end{acknowledgments}

%% file: appendix.tex
\section{SELECT-Oracle Cost Details}\label{app:select-cost}

\begin{lemma}\label{prop:D-decomposition}
The diagonal phase operator
\begin{equation}
    D = \sum_{r=0}^{d-1} e^{i\theta_r}\,|r\rangle\langle r|
\end{equation}
with $\theta_0 = 0$, can be written as
\begin{equation}
D = \sum_{r=0}^{d-1} \mathrm{sgn}(c_r)\,e^{i\pi r/d}\,|r\rangle\langle r|
   = D_{\mathrm{sign}} D_{\mathrm{clock}},
\end{equation}
where $c_r \in \mathbb{R}$ and
\begin{equation}
\begin{aligned}
D_{\mathrm{clock}} &= \sum_{r=0}^{d-1} e^{i\pi r/d}\,|r\rangle\langle r|, \\
D_{\mathrm{sign}} &= |0\rangle \langle 0| + \sum_{r=1}^{d-1} \mathrm{sgn}(c_r)\,|r\rangle\langle r|.
\end{aligned}
\end{equation}
\end{lemma}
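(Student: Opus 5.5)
The plan is to read off the phases $\theta_r=\arg\beta_r$ from the closed form of the Fourier coefficients in Lemma~\ref{prop:qudit-pauli-expansion}. For $r=1,\dots,d-1$, Eq.~\eqref{eq:beta_r} gives
\begin{equation}
\beta_r = c_r\, e^{i\pi r/d},\qquad
c_r = \frac{2\phi_{\max}^2}{(d-1)^2}\,\frac{\cos(\pi r/d)}{\sin^2(\pi r/d)}\in\mathbb{R}.
\end{equation}
By the irreducibility part of Lemma~\ref{prop:qudit-pauli-expansion}, $c_r\neq 0$ for odd $d$. Hence $\mathrm{sgn}(c_r)\in\{\pm1\}$ is well defined, and
\begin{equation}
e^{i\theta_r}=\frac{\beta_r}{|\beta_r|}=\mathrm{sgn}(c_r)\,e^{i\pi r/d}.
\end{equation}
The only care needed is that $\arg$ is defined modulo $2\pi$. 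When $c_r<0$, $\theta_r$ is $\pi r/d+\pi$ up to a multiple of $2\pi$, and this ambiguity is harmless because only $e^{i\theta_r}$ enters $D$.

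Next I would treat the $r=0$ entry separately. There $\theta_0=0$ by convention, since the identity term was dropped in Definition~\ref{prop:qudit-lcu-phi2} and the index state $|0\rangle$ carries $I$. To match the formula $\mathrm{sgn}(c_0)e^{i\pi\cdot 0/d}$, I would set $\mathrm{sgn}(c_0):=1$. This is natural because $c_0$ is not defined by Eq.~\eqref{eq:beta_r}, and $\beta_0>0$ anyway. With this choice $D_{\mathrm{sign}}$ has entry $1$ on $|0\rangle$, which is consistent with $e^{i\pi\cdot0/d}=1$. This gives the first equality, $D=\sum_r \mathrm{sgn}(c_r)e^{i\pi r/d}|r\rangle\langle r|$, on all of $\{0,\dots,d-1\}$.

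Finally, the factorization is immediate. $D_{\mathrm{sign}}$ and $D_{\mathrm{clock}}$ are both diagonal in the computational basis, so their product is diagonal with entries equal to the product of the entries, namely $\mathrm{sgn}(c_r)e^{i\pi r/d}$. This equals $D$, and the two factors commute, so the order is immaterial.

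There is no real obstacle here. The only step needing care is the bookkeeping at $r=0$ and the nonvanishing of $c_r$, which relies on $\cos(\pi r/d)\neq 0$ for odd $d$, already established in Lemma~\ref{prop:qudit-pauli-expansion}.
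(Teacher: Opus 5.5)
Your proposal is correct and follows the paper's proof essentially verbatim. You read off $\beta_r=c_r e^{i\pi r/d}$ from the closed form in Lemma~\ref{prop:qudit-pauli-expansion}, set $e^{i\theta_r}=\beta_r/|\beta_r|=\mathrm{sgn}(c_r)e^{i\pi r/d}$, and multiply the two commuting diagonals. Your explicit convention $\mathrm{sgn}(c_0):=1$ at $r=0$, and your use of $c_r\neq 0$ for odd $d$ (needed for $\mathrm{sgn}(c_r)=c_r/|c_r|$ to make sense), are details the paper leaves implicit, and you handle both correctly.
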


\begin{proof}
The operator $\phi_x^2$ is diagonal with real eigenvalues $\lambda_n^2$, so its expansion in the
generalized Pauli basis is
\begin{equation}
    \phi_x^2 = \sum_{r=0}^{d-1} \beta_r Z_d^r,
\end{equation}
with Fourier coefficients
\begin{equation}
\beta_r = \frac{1}{d}\sum_{n=0}^{d-1} \lambda_n^2\,\omega^{-rn}, \qquad \omega = e^{2\pi i/d}.
\end{equation}
For the uniform symmetric truncation used here, the explicit evaluation of this sum
(Eq.~\ref{eq:coeff_beta_decomp}) gives, for $r \neq 0$,
\begin{equation}
    \beta_r = \frac{2\phi_{\max}^2}{(d-1)^2}\,
    e^{i\pi r/d}\,\frac{\cos(\pi r/d)}{\sin^2(\pi r/d)}
    = c_r e^{i\pi r/d},
\end{equation}
where
\begin{equation}
    c_r = \frac{2\phi_{\max}^2}{(d-1)^2}\,\frac{\cos(\pi r/d)}{\sin^2(\pi r/d)} \in \mathbb{R}.
\end{equation}
Thus each $\beta_r$ has a fixed phase $e^{i\pi r/d}$ and a real amplitude $c_r$.

Defining the phases of $D$ by $e^{i\theta_r} = \beta_r/|\beta_r|$ and using
$\mathrm{sgn}(c_r) = c_r/|c_r|$, we obtain
\begin{equation}\label{eq:beta_phases}
e^{i\theta_r} = \frac{\beta_r}{|\beta_r|}
= \frac{c_r e^{i\pi r/d}}{|c_r|}
= \mathrm{sgn}(c_r)\,e^{i\pi r/d},
\end{equation}
so
\begin{equation}
D = \sum_{r=0}^{d-1} e^{i\theta_r}|r\rangle\langle r|
   = D_{\mathrm{sign}} D_{\mathrm{clock}},
\end{equation}
as claimed.
\end{proof}

\begin{lemma}\label{prop:Dclock-implementation}
The operator $D_{\mathrm{clock}} = \exp\!\big(i \frac{\pi}{d} N\big)$ can be implemented using
$O(n_b)$ single-qubit $Z$-rotations on the index register, i.e., $O(n_b)$ non-Clifford gates.
\end{lemma}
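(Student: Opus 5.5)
The plan is to use the binary expansion of the index, as in Lemma~\ref{prop:mass-trotter-qubit}. Encode the index $r\in\{0,\dots,d-1\}$ in the standard binary embedding on $n_b=\lceil\log_2 d\rceil$ qubits, $r=\sum_{m=0}^{n_b-1}2^m q_m$ with $q_m\in\{0,1\}$. The number operator is then $N=\sum_m 2^m q_m$, and with $q_m=\tfrac{1}{2}(1-Z^{(m)})$ we get
\begin{equation}
\exp\!\Big(i\tfrac{\pi}{d}N\Big)=\prod_{m=0}^{n_b-1}\exp\!\Big(i\tfrac{\pi}{d}2^m q_m\Big),
\end{equation}
because the $q_m$ are commuting diagonal operators. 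The product factorizes over qubits, so no entangling gates are needed at all.

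Each factor is a single-qubit phase gate $\mathrm{diag}(1,e^{i\pi 2^m/d})$ on qubit $m$. Up to a global phase this is $R_z(-\pi 2^m/d)$, since $e^{i\varphi q}=e^{i\varphi/2}e^{-i\varphi Z/2}$. The global phases multiply to $e^{i\frac{\pi}{2d}(2^{n_b}-1)}$, which is irrelevant for the LCU (or, if needed, it is tracked as the same scalar on every branch). This gives exactly $n_b$ single-qubit $Z$-rotations, which proves the $O(n_b)$ bound. The count is a worst case: for prime $d$, $\pi 2^m/d$ is never a multiple of $\pi/4$ once $d>2$, so generically all $n_b$ rotations are non-Clifford and each must be synthesized.

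Next I would check that restricting to the valid index subspace is harmless. The computational states $|d\rangle,\dots,|2^{n_b}-1\rangle$ also pick up phases $e^{i\pi r/d}$. $D_{\mathrm{clock}}$ is diagonal, however, and these states are unpopulated, since PREP only has support on $r\in\{1,\dots,d-1\}$. The circuit therefore agrees with $D_{\mathrm{clock}}$ on the relevant subspace, consistent with the code-switching isometry model of Sec.~\ref{sec:regime2}.

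There is essentially no hard step. The only subtle points are the global-phase bookkeeping and the invalid-state issue, and I would dispatch both with the remarks above.
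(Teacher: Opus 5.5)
Your proof is correct and follows the paper's argument: expand the index in binary, substitute $q_m=(1-Z^{(m)})/2$, and factor $e^{i\pi N/d}$ into $n_b$ commuting single-qubit $Z$-rotations times the discarded global phase $e^{i\frac{\pi}{2d}(2^{n_b}-1)}$. One harmless slip: with the paper's convention $R_z(\theta)=e^{-i\theta Z/2}$, your identity $e^{i\varphi q}=e^{i\varphi/2}e^{-i\varphi Z/2}$ gives $R_z(+\pi 2^m/d)$ rather than $R_z(-\pi 2^m/d)$, and this does not affect the count.
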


\begin{proof}
We first write the integer $r$ in binary as
\begin{equation}
    r = \sum_{m=0}^{n_b-1} 2^m q_m, \qquad q_m \in \{0,1\},
\end{equation}
where $q_m$ is the bit stored on qubit $m$ of the index register. In terms of Pauli operators,
\begin{equation}
    q_m = \frac{1 - Z^{(m)}}{2},
\end{equation}
where $Z^{(m)}$ is the Pauli-$Z$ on qubit $m$. Substituting this into the definition of the number operator, we obtain
\begin{equation}
    N = \sum_{m=0}^{n_b-1} 2^m q_m = \frac{1}{2}\sum_{m=0}^{n_b-1} 2^m - \frac{1}{2}\sum_{m=0}^{n_b-1} 2^m Z^{(m)}.
\end{equation}
It follows that
\begin{equation}
    \begin{aligned}
    D_{\mathrm{clock}}
    &= \exp\left(i\frac{\pi}{d}N\right) \\
    &= \exp\left(i\frac{\pi}{2d}\sum_{m=0}^{n_b-1} 2^m\right)
    \prod_{m=0}^{n_b-1} \exp\left(-i\frac{\pi}{2d} 2^m Z^{(m)}\right).
    \end{aligned}
\end{equation}
The prefactor
\begin{equation}
    \exp\left(i\frac{\pi}{2d}\sum_{m=0}^{n_b-1} 2^m\right)
\end{equation}
is a global phase and can be ignored. Hence $D_{\mathrm{clock}}$ decomposes as a product of $n_b$
single-qubit $Z$-rotations on the index register, one on each qubit $m$ with angle
$-\frac{\pi}{2d}2^m$. In particular, the cost of implementing $D_{\mathrm{clock}}$ is just $n_b$
single-qubit rotations.
\end{proof}

\begin{lemma}\label{prop:Dsign-implementation}
The diagonal operator
\begin{equation}
D_{\mathrm{sign}} = |0\rangle \langle 0| + \sum_{r=1}^{d-1} \mathrm{sgn}(c_r)\,|r\rangle\langle r|
\end{equation}
can be implemented using $4n_b$ T gates, together with $n_b$ scratch ancillas and a single flag ancilla.
\end{lemma}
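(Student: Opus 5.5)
The first step is to determine the sign pattern of $c_r$. Since $c_r\propto \cos(\pi r/d)/\sin^2(\pi r/d)$ and $\sin^2(\pi r/d)>0$ for $1\le r\le d-1$, the sign is that of $\cos(\pi r/d)$. For odd $d=2M+1$ this is $+1$ when $1\le r\le M$ and $-1$ when $M+1\le r\le d-1$; the cosine never vanishes because $\pi r/d\neq \pi/2$ for odd $d$. Together with $\theta_0=0$, this gives
\begin{equation}
D_{\mathrm{sign}}=\sum_{r=0}^{d-1}(-1)^{[r>M]}\,|r\rangle\langle r|,
\end{equation}
so $D_{\mathrm{sign}}$ is a $Z$-type phase conditioned on a single classical predicate, namely the comparison $r\ge M+1$ with the fixed constant $M+1=(d+1)/2$. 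Basis states $r\ge d$ of the $n_b$-qubit register lie outside the valid index subspace (PREP never populates them). We may therefore let the comparator act on them arbitrarily, which lets us use a plain constant comparison with no extra range check.

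Next we implement the predicate with a standard reversible comparison against a classical constant. Write $K=M+1$ in binary, $K=\sum_m 2^m k_m$. Compute $r\ge K$ (equivalently, the carry out of $r+(2^{n_b}-K)$) using a Gidney-style temporary-AND carry chain. This uses one scratch ancilla per bit to hold the running carry, giving $n_b$ scratch ancillas, and the final carry is copied into a flag ancilla. Because one addend is a classical constant, each carry stage reduces, up to Cliffords, to a single AND, i.e. a majority in which one input is a known bit (conditionally NOT the inputs). Each temporary logical-AND costs $4$ T gates to compute and is uncomputed by measurement plus Clifford correction at zero T cost. Summing over the $n_b$ stages gives $4n_b$ T gates. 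If needed, the top stage is absorbed or counted as one of the $n_b$.

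Then we apply the phase. A $Z$ on the flag ancilla, which is Clifford, imprints $(-1)^{[r\ge K]}$. The carry chain is then uncomputed using measurement-based uncomputation of the temporary ANDs, which costs no further T gates. The flag returns to $|0\rangle$, the scratch ancillas are clean, and the net action on the index register is exactly $D_{\mathrm{sign}}$. The resources are $4n_b$ T gates, $n_b$ scratch ancillas and one flag ancilla, as claimed.

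The main obstacle is the bookkeeping that gives exactly $4n_b$ rather than something like $4(n_b-1)$ or $8n_b$. This needs two things: that every stage of the constant comparator uses exactly one AND, and that uncomputation is T-free via measurement and phase fixup. I would write the comparator as a carry chain over the bits of $2^{n_b}-K$, with bits where the constant is $0$ or $1$ turning the majority into an AND or an OR (the latter realized as an AND with negations). If some stages turn out trivial, the bound still holds as an upper bound, which is all the statement needs.
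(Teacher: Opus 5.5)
Your proposal is correct and follows essentially the same route as the paper. Both arguments identify $\mathrm{sgn}(c_r)$ as a single-threshold function of $r$ with threshold $(d+1)/2$, then realize $D_{\mathrm{sign}}$ as a constant comparator, a $Z$ on the flag, and a T-free measurement-based uncomputation, at $4$ T gates for each of $n_b$ temporary logical-ANDs. The only difference is that you build the constant comparator explicitly as a carry chain, and this even shows the lowest stage is AND-free, so the bound is not tight. The paper instead cites the optimized comparator $\mathrm{CMP}'$ from its reference for the $4n_b$ count.
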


\begin{proof}
From the explicit form of the Fourier coefficients of $\phi_x^2$, we have, for $r\neq 0$,
\begin{equation}
    \beta_r = c_r e^{i\pi r/d}, \qquad
    c_r = K\,\frac{\cos(\pi r/d)}{\sin^2(\pi r/d)}, \qquad K \in \mathbb{R},
\end{equation}
so $c_r\in\mathbb{R}$ for all $r$. For odd $d$, we have $0 < \frac{\pi r}{d} < \pi$ for $1 \le r \le
d-1$, so $\sin(\pi r/d) > 0$ and the sign of $c_r$ is determined by $\cos(\pi r/d)$ alone. Since
\begin{equation}
\begin{aligned}
\cos(\pi r/d) &> 0 \quad \text{for } 1 \le r \le \frac{d-1}{2}, \\
\cos(\pi r/d) &< 0 \quad \text{for } \frac{d+1}{2} \le r \le d-1,
\end{aligned}
\end{equation}
it follows that
\begin{equation}\label{eq:cr_crossover}
\begin{aligned}
c_r &> 0 \quad \text{for } 1 \le r \le \frac{d-1}{2}, \\
c_r &< 0 \quad \text{for } \frac{d+1}{2} \le r \le d-1,
\end{aligned}
\end{equation}
and the pattern is antisymmetric $c_{d-r} = -c_r$. In particular, $\mathrm{sgn}(c_r)$ is a
single-threshold function of the integer label $r$ with threshold $T = (d+1)/2$.

We implement $D_{\mathrm{sign}}$ using a standard reversible comparator acting on the $n_b$-qubit
index register and a single ancilla. Let $U_{\mathrm{cmp}}$ be a comparator unitary that compares
the value $r$ on the index register with the fixed classical threshold $T$, and writes the result
into an ancilla qubit $|a\rangle$ initialised in $|0\rangle$:
\begin{equation}
    U_{\mathrm{cmp}}: \quad |r\rangle|0\rangle \;\longmapsto\; |r\rangle|f(r)\rangle,
\end{equation}
where
\begin{equation}
    f(r) =
    \begin{cases}
        0, & r < T,\\
        1, & r \ge T.
    \end{cases}
\end{equation}
Starting from $|r\rangle|0\rangle$, we perform the following three-step sequence:
\begin{equation}
    \begin{aligned}
    |r\rangle|0\rangle
    &\xrightarrow{U_{\mathrm{cmp}}}
    |r\rangle|f(r)\rangle
    \xrightarrow{Z_a}
    (-1)^{f(r)} |r\rangle|f(r)\rangle \\
    &\xrightarrow{U_{\mathrm{cmp}}^\dagger}
    (-1)^{f(r)} |r\rangle|0\rangle.
    \end{aligned}
\end{equation}
After this sequence, the ancilla is returned to $|0\rangle$, and the net action on the index register is
\begin{equation}
    |r\rangle \;\longmapsto\; (-1)^{f(r)} |r\rangle =
    \begin{cases}
        +\,|r\rangle, & r < T,\\
        -\,|r\rangle, & r \ge T,
    \end{cases}
\end{equation}
which is precisely the desired diagonal
\begin{equation}
     \sum_{r=1}^{d-1} (-1)^{f(r)} |r\rangle\langle r|
    = \sum_{r=1}^{d-1} \mathrm{sgn}(c_r) |r\rangle\langle r|.
\end{equation}
In compact form we can write
\begin{equation}
    D_{\mathrm{sign}} = U_{\mathrm{cmp}}^\dagger \,(I \otimes Z_a)\, U_{\mathrm{cmp}}.
\end{equation}

The cost of this implementation is dominated by the comparator.
Ref.~\cite{hardy2024optimizedquantumsimulationalgorithms}
(Appendix~C.1(a), Eqs.~(C22)--(C23)) presents an $n_b$-bit comparator and shows that an optimized variant $\mathrm{CMP}'$
can be implemented using $n_b$ logical ANDs, hence a T-count of $4n_b$ (up to Cliffords), while
$\mathrm{CMP}'^\dagger$ is implemented using measurements and classically conditioned Cliffords.
Therefore the phase-kickback
construction $D_{\mathrm{sign}} = U_{\mathrm{cmp}}^\dagger (I\otimes Z_a) U_{\mathrm{cmp}}$ has T-count $4n_b$,
using $O(n_b)$ scratch qubits for carries and one flag ancilla.

\end{proof}

\section{Regime-2 Crossover Details}\label{app:reg2-crossover}
\begin{corollary}\label{prop:D-diagonal-cost-count}
Define $D$ as in Corollary~\ref{prop:D-diagonal-cost}.
Then there exist angles $\vartheta_0,\dots,\vartheta_{d-2}$ (unique modulo $4\pi$) and a global
phase $\gamma\in\mathbb{R}$ (unique modulo $2\pi$)
such that
\begin{equation}
D=e^{i\gamma}\prod_{k=0}^{d-2}R_Z^{(k,k+1)}(\vartheta_k).
\end{equation}
Consequently, $D$ can be implemented using at most $d-1$ adjacent embedded $SU(2)$ rotations.
\end{corollary}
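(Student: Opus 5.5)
This corollary is a direct specialization of Lemma~\ref{prop:su2_irred}. The operator $D$ is diagonal on a single $d$-level index qudit, so it is handled exactly like $e^{-it\phi_x^2}$, with the target phases now being the LCU coefficient phases. Concretely, write $D=\sum_{r=0}^{d-1}e^{-i\beta_r}|r\rangle\langle r|$ with $\beta_r=-\theta_r$ and $\theta_0=0$. By Lemma~\ref{prop:D-decomposition}, $\theta_r$ is fixed by $e^{i\theta_r}=\mathrm{sgn}(c_r)e^{i\pi r/d}$, but the explicit form is not needed for existence. Set $\bar\beta=\frac1d\sum_r\beta_r$ and $\gamma=-\bar\beta$, so that
\[
D=e^{i\gamma}\sum_r e^{-i(\beta_r-\bar\beta)}|r\rangle\langle r|,
\]
where the centered phases $\alpha_r=\beta_r-\bar\beta$ sum to zero.

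For existence I will reuse the linear system Eq.~\ref{eq:su2irred_linearsystem}. The product $\prod_{k=0}^{d-2}R_Z^{(k,k+1)}(\vartheta_k)$ applies the phase $e^{-i(\vartheta_r-\vartheta_{r-1})/2}$ on level $r$, with the boundary conventions $\vartheta_{-1}=\vartheta_{d-1}=0$. Choosing $\vartheta_k=2\sum_{n=0}^{k}\alpha_n$ reproduces $\alpha_r$ on every level $r\le d-2$. The last level $r=d-1$ is consistent because $\sum_r\alpha_r=0$ forces $\vartheta_{d-2}=-2\alpha_{d-1}$. This establishes the factorization, and hence the count of at most $d-1$ adjacent embedded rotations.

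The step needing real care is uniqueness, since the statement asserts uniqueness of both $\vartheta_k$ modulo $4\pi$ and $\gamma$ modulo $2\pi$, and the phases are only defined modulo $2\pi$. Suppose $e^{i\gamma}\prod R(\vartheta)=e^{i\gamma'}\prod R(\vartheta')$. The $SU(d)$ part of the product has determinant $1$, so $e^{id\gamma}=e^{id\gamma'}$. This alone only fixes $\gamma$ modulo $2\pi/d$, so I need a second ingredient.

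For that, I will compare the level-$0$ entries and then propagate. Level $0$ gives $\gamma-\vartheta_0/2\equiv\gamma'-\vartheta_0'/2 \pmod{2\pi}$, and each subsequent level gives $\vartheta_k-\vartheta_{k-1}\equiv\vartheta'_k-\vartheta'_{k-1}\pmod{4\pi}$ up to a shift of $2(\gamma-\gamma')$. Summing these relations telescopically and using the boundary condition $\vartheta_{d-1}=0$ should force $d(\gamma-\gamma')\equiv 0$. Any residual ambiguity $\gamma\to\gamma+2\pi j/d$ shifts each $\vartheta_k$ by a multiple of $4\pi j(k+1)/d$.

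I expect this residual ambiguity to be the main obstacle. For $j\ne0$ these shifts are genuinely different angles modulo $4\pi$, so strict uniqueness requires a convention. I would fix it by choosing $\gamma=-\bar\beta$ with $\beta_r$ taken as the specific representatives $-\theta_r$ used above; this is exactly the sense in which Lemma~\ref{prop:su2_irred} asserts uniqueness. Given $\gamma$, the angles are then uniquely determined modulo $4\pi$ by the same triangular inversion as in that lemma. The implementation count of at most $d-1$ rotations follows immediately from the existence part.
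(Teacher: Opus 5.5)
Your proposal is correct and follows the paper's route. Like the paper, you centre the phases with $\gamma=\frac1d\sum_n\theta_n$ (your $-\bar\beta$) and apply Lemma~\ref{prop:su2_irred} to the zero-sum phases, which gives the same angles $\vartheta_k=-2\sum_{n\le k}(\theta_n-\gamma)$. Your uniqueness caveat is also right, and sharper than the paper's wording: $D$ fixes $\gamma$ only modulo $2\pi/d$, and each shift $\gamma\to\gamma+2\pi j/d$ moves $\vartheta_k$ by $4\pi j(k+1)/d$ modulo $4\pi$, so ``unique modulo $2\pi$'' holds only under the paper's implicit convention $\theta_r\in[0,2\pi)$. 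Your convention must pin these representatives as well, since shifting a single $\theta_r$ by $2\pi$ is exactly a $j=1$ move.
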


\begin{proof}
Adjoin the trivial phase on $|0\rangle$ and write
\begin{equation}
D=\sum_{n=0}^{d-1}e^{i\theta_n}|n\rangle\!\langle n|,\qquad \theta_0=0,
\end{equation}
where for $r\ge 1$ the phases are those defining $D$ from Eq.~\ref{eq:beta_phases}.
With Eq.~\ref{eq:cr_crossover} determining $(\operatorname{sgn}(c_r))$, we have
\begin{equation}
\theta_r =
\begin{cases}
\displaystyle \frac{\pi r}{d}, & \operatorname{sgn}(c_r)=+1,\\[4pt]
\displaystyle \frac{\pi r}{d}+\pi, & \operatorname{sgn}(c_r)=-1,
\end{cases}
\end{equation}
    so that $\theta_r\in[0,2\pi)$ and $e^{i\theta_r}=\frac{\beta_r}{|\beta_r|}$.
Define the global phase
\begin{equation}
\gamma=\frac1d\sum_{n=0}^{d-1}\theta_n,\qquad \alpha_n=\theta_n-\gamma,
\end{equation}
so that $\sum_{n=0}^{d-1}\alpha_n=0$, and
\begin{equation}
D=e^{i\gamma}\sum_{n=0}^{d-1}e^{i\alpha_n}|n\rangle\!\langle n|.
\end{equation}
With this form, we can invoke Lemma~\ref{prop:su2_irred}: for any diagonal phases $(\alpha_n)_{n=0}^{d-1}$ with
$\sum_n\alpha_n=0$, there exists a unique choice of adjacent angles
\begin{equation}\label{eq:vartheta}
\vartheta_k=-2\sum_{n=0}^{k}\alpha_n,\qquad k=0,1,\dots,d-2,
\end{equation}
such that
\begin{equation}
\prod_{k=0}^{d-2}R_Z^{(k,k+1)}(\vartheta_k)=\sum_{n=0}^{d-1}e^{i\alpha_n}|n\rangle\!\langle n|.
\end{equation}
Therefore,
\begin{equation}
D=e^{i\gamma}\prod_{k=0}^{d-2}R_Z^{(k,k+1)}(\vartheta_k),
\end{equation}
proving existence; uniqueness ($\text{mod }4\pi$) follows from the uniqueness statement in
Lemma~\ref{prop:su2_irred}. In our case, by
Eq.~\ref{eq:beta_phases}, we have
\begin{equation}
e^{i\theta_r}=\frac{\beta_r}{|\beta_r|}
=\operatorname{sgn}(c_r)\,e^{i\pi r/d},
\qquad r=1,\dots,d-1,
\end{equation}
where $c_r$ is determined by Eq.~\ref{eq:cr_crossover}.
Substituting $\alpha_n=\theta_n-\gamma$ into Eq.~\ref{eq:vartheta} yields
\begin{equation}
\vartheta_k
=-2\Big(\sum_{n=0}^k\theta_n-(k+1)\gamma\Big),
\qquad k=0,1,\dots,d-2.
\end{equation}
Evaluating these sums gives the explicit closed form
\begin{equation}\label{eq:col14_cases}
\vartheta_k=
\begin{cases}
\displaystyle \frac{\pi}{d}(k+1)(4m-k),
& 0\le k\le m,\\[8pt]
\displaystyle \frac{\pi}{d}(k+1)(4m-k)\\
\displaystyle {}-2\pi(k-m),
& m<k\le d-2,
\end{cases}
\end{equation}
where $m =(d-1)/2$.
\end{proof}
Note that $R_Z^{(k,k+1)}(\vartheta_k)$ is trivial if and only if
$\vartheta_k\equiv 0\ (\mathrm{mod}\ 4\pi)$. In the numerical range considered
in Sec.~\ref{sec:regime2_nocs_crossover}, the exact value of $s(d)$ is
evaluated directly from the closed form in Eq.~\ref{eq:col14_cases}. For odd
$d\le 513$, one finds the decomposition uses
$d-1$, $d-2$, or $d-4$ adjacent embedded rotations. Therefore, Corollary~\ref{prop:D-diagonal-cost-count}
gives the uniform upper bound $d-1$.

\begin{lemma}\label{prop:prep-phi2-qudit-cost-nocs}
Define the state
\begin{equation}
|\psi_\beta\rangle
\;=\;
\sum_{r=1}^{d-1}\sqrt{\frac{|\beta_r|}{\Lambda}}\;|r\rangle,
\qquad
\Lambda=\sum_{r=1}^{d-1}|\beta_r|
\end{equation}
from Eq.~\ref{prep_def}. Then there exists a product of exactly $d-1$ embedded two-level $SU(2)$
rotations that prepare $|\psi_\beta\rangle$
from $|0\rangle$, i.e.,
\begin{equation}
\mathrm{PREP}\,|0\rangle = |\psi_\beta\rangle,
\end{equation}
with the explicit decomposition
\begin{equation}
\mathrm{PREP}
\;=\;
\prod_{r=d-1}^{1} R_Y^{(0,r)}(\theta_r).
\label{eq:prep-ry-product}
\end{equation}
\end{lemma}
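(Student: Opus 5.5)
The plan is to build $|\psi_\beta\rangle$ by the standard "fan-out from $|0\rangle$" Givens-rotation scheme. Each rotation $R_Y^{(0,r)}(\theta_r)$ moves a prescribed fraction of the amplitude still sitting on $|0\rangle$ onto $|r\rangle$. The key observation is that $R_Y^{(0,r)}(\theta)$ acts only on $\mathrm{span}\{|0\rangle,|r\rangle\}$. On that span it is the real rotation
\begin{equation}
R_Y^{(0,r)}(\theta):\ |0\rangle\mapsto \cos(\theta/2)|0\rangle+\sin(\theta/2)|r\rangle .
\end{equation}
This follows directly from $Y^{(0,r)}=-i|0\rangle\langle r|+i|r\rangle\langle 0|$ and $\exp(-i\theta Y/2)=\cos(\theta/2)I-i\sin(\theta/2)Y$ on the two-level block. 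It also fixes $|s\rangle$ for $s\notin\{0,r\}$. Consequently, once amplitude has been placed on $|r\rangle$, later rotations $R_Y^{(0,r')}$ with $r'\neq r$ never disturb it.

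The steps, in order, are as follows. First, set $p_r=|\beta_r|/\Lambda$, so $\sum_{r=1}^{d-1}p_r=1$, and note that every $p_r>0$ by the full-support statement of Lemma~\ref{prop:qudit-pauli-expansion}. Second, read the product in Eq.~\eqref{eq:prep-ry-product} so that $R_Y^{(0,1)}$ acts first and $R_Y^{(0,d-1)}$ acts last. Define the residual weights $S_r=\sum_{s=r}^{d-1}p_s$, so that $S_1=1$ and $S_{d-1}=p_{d-1}$. Third, prove by induction on $r$ the invariant that after applying $R_Y^{(0,1)},\dots,R_Y^{(0,r)}$ the state is
\begin{equation}
\sum_{s=1}^{r}\sqrt{p_s}\,|s\rangle+\sqrt{S_{r+1}}\,|0\rangle ,
\end{equation}
provided the angles are chosen by
\begin{equation}
\sin(\theta_r/2)=\sqrt{p_r/S_r},\qquad \cos(\theta_r/2)=\sqrt{S_{r+1}/S_r},
\end{equation}
with $\theta_r\in(0,\pi]$. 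The inductive step is a one-line application of the two-level action above: the amplitude $\sqrt{S_r}$ on $|0\rangle$ splits into $\sqrt{S_{r+1}}|0\rangle+\sqrt{p_r}|r\rangle$. Fourth, at $r=d-1$ we have $S_d=0$, so $\theta_{d-1}=\pi$ and the $|0\rangle$ amplitude vanishes, leaving exactly $|\psi_\beta\rangle$. Since all amplitudes are real and nonnegative, no phase corrections are needed.

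For exactness of the count, "exactly $d-1$" follows because each $p_r>0$ forces $\theta_r\neq 0$ modulo $4\pi$. None of the rotations is trivial, so none can be dropped from this construction. This is the analogue of the nonvanishing argument used after Lemma~\ref{prop:su2_irred}.

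The only step I expect to need care is the ordering and sign conventions: confirming that the product notation $\prod_{r=d-1}^{1}$ places $R_Y^{(0,1)}$ rightmost, and that $R_Y$ sends $|0\rangle$ to $+\sin(\theta/2)|r\rangle$ rather than $-\sin(\theta/2)|r\rangle$ under the stated $Y^{(b,c)}$. If the sign convention came out the other way, I would simply flip the sign of each $\theta_r$. If one also wants PREP to be unitary on the whole space, as opposed to just its action on $|0\rangle$, nothing more is required, because a product of embedded $SU(2)$ rotations is automatically unitary.
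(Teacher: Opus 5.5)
Your proposal is correct and follows essentially the same route as the paper. Both use the sequential fan-out of $R_Y^{(0,r)}$ rotations from $|0\rangle$, and your residual-weight angles $\sin(\theta_r/2)=\sqrt{p_r/S_r}$ are the closed form of the paper's recursion $\sin(\theta_r/2)=a_r/\prod_{k<r}\cos(\theta_k/2)$, because that product equals $\sqrt{S_r}$; the "exactly $d-1$" count likewise rests in both arguments on $\beta_r\neq 0$ forcing every rotation to be nontrivial.
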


\begin{proof}
Write
\begin{equation}
a_r = \sqrt{\frac{|\beta_r|}{\Lambda}},
\qquad r=1,\dots,d-1,
\end{equation}
so that $a_r\ge 0$ and $\sum_{r=1}^{d-1}a_r^2=1$.
For each $r\in\{1,\dots,d-1\}$, let
\begin{equation}
U_r = R_Y^{(0,r)}(\theta_r)
\end{equation}
denote the embedded two-level $Y$-rotation acting only on $\mathrm{span}\{|0\rangle,|r\rangle\}$.
In that two-dimensional subspace it acts as
\begin{equation}
\begin{aligned}
|0\rangle &\mapsto \cos\!\left(\frac{\theta_r}{2}\right)|0\rangle
+\sin\!\left(\frac{\theta_r}{2}\right)|r\rangle, \\
|r\rangle &\mapsto -\sin\!\left(\frac{\theta_r}{2}\right)|0\rangle
+\cos\!\left(\frac{\theta_r}{2}\right)|r\rangle,
\end{aligned}
\end{equation}
and it acts trivially on all basis states $|k\rangle$ with $k\neq 0,r$.
We apply these rotations sequentially to $|0\rangle$, i.e.
\begin{equation}
|\psi_r\rangle
= U_r U_{r-1}\cdots U_1\,|0\rangle.
\end{equation}
Since $U_r$ only mixes the amplitudes on $|0\rangle$ and $|r\rangle$, we can track the amplitudes explicitly.
After $r$ steps the state has the form
\begin{equation}
\begin{aligned}
|\psi_r\rangle
&=
\left(\prod_{k=1}^{r}\cos\!\left(\frac{\theta_k}{2}\right)\right)|0\rangle
\\
&\quad +
\sum_{j=1}^{r}
\left(
\sin\!\left(\frac{\theta_j}{2}\right)
\prod_{k=1}^{j-1}\cos\!\left(\frac{\theta_k}{2}\right)
\right) \\
&\qquad\times |j\rangle.
\end{aligned}
\label{eq:explicit-state}
\end{equation}
Indeed, at step $r$ the operator $U_r$ multiplies the current amplitude on $|0\rangle$ by
$\cos(\theta_r/2)$, and creates a new amplitude on $|r\rangle$ equal to
$\sin(\theta_r/2)$ times the \emph{current} amplitude on $|0\rangle$, while leaving all previously-prepared
levels $|1\rangle,\dots,|r-1\rangle$ unchanged.

We now choose the angles $\theta_r$ so that the amplitude on $|r\rangle$ equals the desired value $a_r$.
From \eqref{eq:explicit-state}, we require
\begin{equation}
a_r
=
\left(\prod_{k=1}^{r-1}\cos\!\left(\frac{\theta_k}{2}\right)\right)
\sin\!\left(\frac{\theta_r}{2}\right),
\end{equation}
or equivalently,
\begin{equation}
\sin\!\left(\frac{\theta_r}{2}\right)
=
\frac{a_r}{\displaystyle\prod_{k=1}^{r-1}\cos\!\left(\frac{\theta_k}{2}\right)}.
\label{eq:theta-recursion}
\end{equation}
This gives the explicit update rule for $\theta_r$.
This recursion is well-defined because the denominator is exactly the current amplitude on $|0\rangle$. Moreover, since
\begin{equation}
\left(\prod_{k=1}^{r-1}\cos^2\!\left(\frac{\theta_k}{2}\right)\right)
=
1-\sum_{j=1}^{r-1}a_j^2
\;\ge\; a_r^2,
\end{equation}
the ratio in \eqref{eq:theta-recursion} lies in $[0,1]$, so $\theta_r$ always exists.
With this choice, \eqref{eq:explicit-state} implies that after $d-1$ steps the amplitudes satisfy
\begin{equation}
|\psi_{d-1}\rangle
=
\left(\prod_{k=1}^{d-1}\cos\!\left(\frac{\theta_k}{2}\right)\right)|0\rangle
\;+\;
\sum_{r=1}^{d-1} a_r\,|r\rangle.
\end{equation}
Finally, since $\sum_{r=1}^{d-1}a_r^2=1$, the leftover amplitude on $|0\rangle$ must vanish, i.e.
\begin{equation}
\prod_{k=1}^{d-1}\cos\!\left(\frac{\theta_k}{2}\right)=0,
\end{equation}
and therefore
\begin{equation}
\left(\prod_{r=d-1}^{1} R_Y^{(0,r)}(\theta_r)\right)|0\rangle
=
\sum_{r=1}^{d-1} a_r\,|r\rangle
=
|\psi_\beta\rangle,
\end{equation}
which proves \eqref{eq:prep-ry-product}.
As $\beta_r\neq 0$ by Lemma~\ref{prop:qudit-pauli-expansion}, we have $a_r>0$ for all $r=1,\dots,d-1$,
so each level $|r\rangle$ must receive nonzero amplitude. Hence the construction requires (and uses)
exactly $d-1$ nontrivial embedded two-level rotations.
\end{proof}